\documentclass[twcolumn,journal,final]{IEEEtran}

\usepackage{graphicx}
\usepackage{bm}
\usepackage{url}
\usepackage{amsfonts,amssymb,amsmath}
\usepackage{mystyle}
\allowdisplaybreaks                              
\centerfigcaptionstrue

\def\Expt{\mathbb E}

\def\Hnc{\bH_1}
\def\Xn{\bX_1}
\def\Yn{\mathbf{Y}_1}
\def\Rnc {R_1}
\def\Wn{\mathbf{W}_1}
\def\Tx{\bX}
\def\Hc{\bH_2}
\def\Xc{\bX_2}
\def\Yc{\mathbf{Y}_2}
\def\Rc{R_2}
\def\Tc{T}
\def\Nn{N_1}
\def\Nc{N_2}
\def\NTx{M}
\def\tr{\text{tr}}
\def\Heq1{{\widetilde{\bH}_1}}
\def\Wc{\mathbf{W}_2}

\begin{document}

\title{Product Superposition for MIMO Broadcast Channels} \author{Yang
  Li, {\em Student Member, IEEE}, and Aria Nosratinia, {\em Fellow,
    IEEE}\thanks{Manuscript received November 11, 2011; revised June
    6, 2012; accepted June 7, 2012. The material in this paper was
    presented in part in ISIT 2011 and 2012.} \thanks{ The authors are
    with the University of Texas at Dallas, Richardson, TX 75080, USA,
    Email: yang@utdallas.edu, aria@utdallas.edu. }\thanks{Communicated
    by Angel Lozano, Editor for Communications.}}

\maketitle

\begin{abstract}
This paper considers the multiantenna broadcast channel without
transmit-side channel state information (CSIT). For this channel, it
has been known that when all receivers have channel state information
(CSIR), the degrees of freedom (DoF) cannot be improved beyond what is
available via TDMA. The same is true if none of the receivers possess
CSIR.  This paper shows that an entirely new scenario emerges when
receivers have unequal CSIR. In particular, orthogonal transmission is
no longer DoF-optimal when one receiver has CSIR and the other does
not.  A multiplicative superposition is proposed for this scenario and
shown to attain the optimal degrees of freedom under a wide set of
antenna configurations and coherence lengths.  Two signaling schemes
are constructed based on the multiplicative superposition. In the
first method, the messages of the two receivers are carried in the row and
column spaces of a matrix, respectively. This method works better than
orthogonal transmission while reception at each receiver is still
interference-free. The second method uses coherent signaling for the
receiver with CSIR, and Grassmannian signaling for the receiver
without CSIR. This second method requires interference cancellation at
the receiver with CSIR, but achieves higher DoF than the first method.
\end{abstract}

\section{Introduction}
\label{sec:Introduction}

In the MIMO broadcast channel, 
when channel state information is available at the receiver (CSIR) but
not at the transmitter (CSIT), orthogonal transmission (e.g., TDMA)
achieves optimal degrees of freedom (DoF)~\cite{Caire2003, Huang2009}.
With neither CSIT nor CSIR, again orthogonal transmission achieves the
best possible DoF~\cite{Jafar2005}.  This paper studies the broadcast
channel where one receiver has full CSIR and another has no CSIR. In
this case, new DoF gains are discovered that can be unlocked with
novel signaling strategies.

The study of broadcast channels with unequal CSIR is motivated by
downlink scenarios where users have different mobilities. Subject to a
fixed downlink pilot transmission schedule, the low-mobility users
have the opportunity to reliably estimate their channels, while the
high-mobility users may not have the same opportunity. 


The main result of this paper is that when one receiver has full CSIR
and the other has none, the achieved DoF is strictly better than that
obtained by orthogonal transmission. For the unequal CSIR scenario, we
propose a {\em product superposition}, where the signals of the two
receivers are multiplied to produce the broadcast signal.
In the following the receiver with full CSIR is referred to as the
{\em static receiver} and the receiver with no CSIR as the {\em
  dynamic receiver}. Two classes of product superposition signaling
are proposed:
\begin{itemize}
\item In the first method, information for both receivers is conveyed
  by the row and column spaces of a transmit signal matrix,
  respectively. The signal matrix is constructed from a product of two
  signals that lie on different Grassmannians. The two receivers do
  {\em not} interfere with each other even though there is no CSIT, a
  main point of departure from traditional superposition
  broadcasting~\cite{Caire2003,Coverbook}.

\item In the second method, information for the static receiver is
  carried by the signal matrix values (coherent signaling), while
  information for the dynamic receiver is transported on the
  Grassmannian. The static receiver is required to decode and cancel
  interference, therefore this method is slightly more involved, but
  it achieves higher DoF compared with the first method.
\end{itemize}
Using the proposed methods, the exact DoF region is found when $\Nn\le
\Nc \le \NTx$, $\Tc\ge 2\Nn$, where $\Nn$, $\Nc$ and $\NTx$ are the
number of antennas at the dynamic receiver, static receiver and
transmitter, respectively, and $\Tc$ is the channel coherence time of
the dynamic receiver. For $\Nc<\Nn\le\NTx$, $\Tc\ge 2\Nn$, we
partially characterize the DoF region when either the channel is the
more capable type~\cite{Gamal1979}, or when the message set is
degraded~\cite{Korner1977}.

We use the following notation throughout the paper: for a matrix
$\bA$, the transpose is denoted with $\bA^t$, the conjugate transpose
with $\bA^\dag$, and the element in row $i$ and column $j$ with
$[\bA]_{i,j}$.  The $k\times k$ identity matrix is denoted with
$\bI_k$. The set of $n\times m$ complex matrices is denoted with
$\mathcal{C}^{n\times m}$.

The organization of this paper is as follows. In
Section~\ref{sec:Preliminaries} we introduce the system model and
preliminary results. Two signaling methods are proposed and studied in
Section~\ref{sec:Broadcasting1} and Section~\ref{sec:Broadcasting2},
respectively.

\section{System Model and Preliminaries}
\label{sec:Preliminaries}

We consider a broadcast channel with an $\NTx$-antenna transmitter and
two receivers. One receiver has access to channel state information
(CSI), and is referred to as the {\em static receiver.}  The other
receiver has no CSI, e.g. due to mobility, and is referred to as the
{\em dynamic receiver}.  The dynamic receiver has $\Nn$ antennas and
the static receiver has $\Nc$ antennas. Denote the channel coefficient
matrices from the transmitter to the dynamic and static receivers by
$\Hnc\in\mathcal{C}^{\Nn\times \NTx}$ and
$\Hc\in\mathcal{C}^{\Nc\times \NTx}$, respectively. We assume that
$\Hnc$ is constant for $\Tc$ symbols (block-fading) and is unknown to
both receivers, while $\Hc$ is known by the static receiver but not
known by the dynamic receiver.\footnote{In practice $\Hc$ for a static
  receiver  may vary across
  intervals of length much greater than $T$. However, for the purposes
  of this paper, once $\Hc$ is assumed to be known to the static
  receiver, its time variation (or lack thereof) does not play any
  role in the subsequent mathematical developments. Therefore in the
  interest of elegance and for a minimal description of the
  requirements for the results, we only state that $\Hc$ is known.}
Neither $\Hnc$ nor $\Hc$ is known by the transmitter
(no CSIT). 

\begin{figure}
\centering
\includegraphics[width=2.5in]{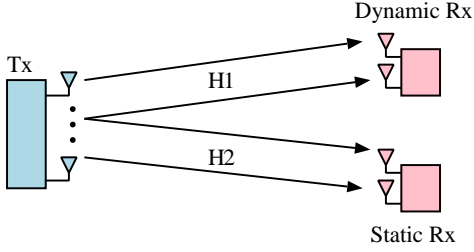}
\caption{Channel model.}
\label{fig:Channel_Model}
\end{figure}

Over $\Tc$ time-slots (symbols) the transmitter sends
$\Tx=[\mathbf{x}_1,\cdots,\mathbf{x}_M]^t$ across $\NTx$ antennas,
where $\mathbf{x}_i\in \mathcal{C}^{\Tc\times 1}$ is the signal vector
sent by the antenna $i$. The normalized signal at the dynamic and
static receivers is respectively
\begin{align}
\Yn &= \Hnc \Tx + \frac{1}{\sqrt{\rho}} \Wn, \nonumber \\
\Yc & = \Hc \Tx + \frac{1}{\sqrt{\rho}} \Wc, \label{eq:channelmodel}
\end{align}
where $\Wn\in \mathcal{C}^{\Nn \times \Tc}$ and $\Wc\in
\mathcal{C}^{\Nc \times \Tc}$ are additive noise with i.i.d. entries
$\mathcal{CN}(0,1)$. Each row of $\Yn\in \mathcal{C}^{\Nn \times \Tc}$
(or $\Yc\in \mathcal{C}^{\Nc \times \Tc}$) corresponds to the received
signal at an antenna of the dynamic receiver (or the static receiver)
over $\Tc$ time-slots.  The transmitter is assumed to have an average
power constraint $\rho$, and therefore, in the normalized channel
model given by~\eqref{eq:channelmodel}, the average power constraint
is:
\begin{equation}
\Expt\big[\sum_{i=1}^{\NTx}\tr(\mathbf{x}_i\mathbf{x}_i^{\dag})\big] =
\Tc. \label{eq:powerconstraint}
\end{equation}

The channel $\Hnc$ has i.i.d. entries with zero mean and unit
variance, but we do {\em not} assign any specific distribution for
$\Hnc$. This general model includes Rayleigh fading as a special case
where the entries of $\Hnc$ are i.i.d. $\mathcal{CN}(0,1)$. The
channel $\Hc$ is assumed to have full rank; this assumption, e.g.,
holds with probability $1$ if the entries of $\Hc$ are drawn
independently according to a continuous distribution.  We focus on the
case of $\NTx = \max(\Nn,\Nc)$ and $\Tc\ge 2\Nn$, which is motivated
by the fact that having more transmit antennas does not increase the
multiplexing gain for either receiver, and the fact that if $\Tc <
2\Nn$, some of the antennas of the dynamic receiver can be deactivated
without any loss in the degrees of freedom (DoF)~\cite{Zheng2002}.

The degrees of freedom at receiver $i$ is defined as:
\begin{equation}
d_i=\lim_{\rho\rightarrow\infty}\frac{R_i(\rho)}{\log \rho}, \nonumber
\end{equation}
where $R_i(\rho)$ is the rate of the dynamic receiver for $i=1$ and
the static receiver for $i=2$, respectively.  

\subsection{Definitions}

\begin{definition}[Isotropically Distributed Matrix~\cite{Marzetta1999}]
\label{def:id}
A random matrix $\bX\in\mathcal{C}^{k\times n}$, where $n\ge k$, is
called isotropically distributed (i.d.) if its distribution is
invariant under unitary transformations, i.e., for any deterministic
$n\times n$ unitary matrix $\mathbf{\Phi}$,
\begin{equation}
p(\bX) = p( \bX \mathbf{\Phi}).
\end{equation}
\end{definition}
An example of i.d. matrices is $\bX$ with i.i.d. $\mathcal{CN}(0,1)$
entries.
\begin{remark}
\label{remark:1}
An interesting property of i.d. matrices is that if $\bX$ is i.d. and
$\mathbf{\Phi}$ is a random unitary matrix that is independent of
$\bX$, then $\mathbf{ X\Phi}$ is independent of
$\mathbf{\Phi}$~\cite[Lemma 4]{Zheng2002}.  That is, any rotation to
an i.d. matrix is essentially ``invisible.''
\end{remark}

\begin{definition}[Stiefel manifold~\cite{Boothbybook}]
\label{def:stiefel}
The Stiefel manifold $\mathbb{F}(n,k)$, where $n>k$, is the set of
all $k\times n$ unitary matrices, i.e.,
\begin{equation}
\mathbb{F}(n,k) =\big\{ \bQ\in \mathcal{C}^{k\times n}:\, \mathbf{Q Q}^{\dag} =
\bI_k \big\}. \nonumber
\end{equation}
\end{definition}

For $k=1$, the manifold $\mathbb{F}(n,1)$ is the collection of all
$n$-dimensional vectors with unit norm, i.e., the surface of a unit
ball.

\begin{definition}[Grassmann manifold~\cite{Boothbybook}]
The Grassmann manifold $\mathbb{G}(n,k)$, where $n>k$, is the set of
all $k$-dimensional subspaces of $\mathcal{C}^n$.
\end{definition}
\begin{remark}
\label{remark:dimGrassmann}
The (complex) dimension of
$\mathbb{G}(n,k)$ is
\begin{equation}
\dim\big(\mathbb{G}(n,k)\big) = k(n-k),
\end{equation}
i.e., each point in $\mathbb{G}(n,k)$ has a neighborhood that is
equivalent (homeomorphic) to a ball in the Euclidean space of complex
dimension $k(n-k)$. The dimensionality of Grassmannian can also be
viewed as follows. For any matrix $\bQ$, there exists a $k\times k$
full rank matrix $\mathbf{U}$ so that
\begin{align}
\bQ^*=\mathbf{U Q} = \begin{bmatrix}
&1\ &\cdots\ &0 \ &x_{1,k+1}\ &\cdots\ &x_{1n} \\
&0\ &\cdots\ &0 \\
&\vdots &\quad &\vdots &\quad &\quad &\vdots \\
&0 &\cdots &1 & x_{k,k+1} &\cdots & x_{kn}
\end{bmatrix}, \label{eq:GrassmannianEquivalence}
\end{align}
where $\bQ$ and $\bQ^*$ span the same row
space. Therefore, each point in $\mathbb{G}(n,k)$ is determined by
$k(n-k)$ complex parameters $x_{ji}$, for $1\le j\le k$ and $k+1\le
i\le n$. In other words, a $k$-dimension subspace in $\mathcal{C}^n$
is uniquely decided by $k(n-k)$ complex variables.
\end{remark}

\subsection{Non-coherent Point-to-point Channels}
\label{sec:noncoherent}

The analysis in this paper uses insights and results from non-coherent
communication in point-to-point MIMO channels, which are briefly
outlined below.

\subsubsection{Intuition}
Consider a point-to-point $M\times N$ MIMO channel where the receiver
does not know the channel $\bH$, namely a non-coherent channel.

At high SNR the additive noise is negligible, so the received signal
$\mathbf{Y} \approx \bH \Tx$, where $\Tx$ is the transmitted
signal. Because $\Tx$ is multiplied by a random and unknown
$\bH$, the receiver cannot decode
$\Tx$. 
However, communication is still possible because, for any non-singular
$\bH$, the received signal $\mathbf{Y}$ spans the same row space as
$\Tx$. Therefore, the row space of $\Tx$ can be used to carry
information without the need to know $\bH$, i.e., the codebook
consists of matrices with different row spaces.

Conveying information via subspaces can be viewed as communication on
the Grassmann manifold where each distinct point in the manifold
represents a different subspace~\cite{Zheng2002}. In this case, the
codewords (information) are represented by subspaces, which differs
from the coherent communication that maps each codeword into one point
in a Euclidean space~\cite{Proakisbook}. Intuitively, the information
of a Grassmannian codeword is carried by $k(n-k)$ variables, as seen
in~\eqref{eq:GrassmannianEquivalence}.

\subsubsection{Optimal Signaling}
The design of an optimal signaling can be viewed as sphere packing
over Grassmannians~\cite{Zheng2002}. At high SNR, the optimal signals
are isotropically distributed unitary matrices~\cite{ Marzetta1999,
  Zheng2002}. In addition, the optimal number of transmit antennas
depends on the channel coherence time. For a short coherence interval,
using fewer antennas may lead to a higher capacity, and vice
versa. The optimal number of transmit antennas is
\begin{equation}
K = \min (M,N,\lfloor \Tc/2 \rfloor), \label{eq:K}
\end{equation}
where $T$ is the channel coherence time, i.e., the number of symbols
that the channel remains constant. Therefore, the optimal signals are
$K\times \Tc$ unitary matrices. In other words, $K$ antennas ($K\le M$)
are in use and they transmit equal-energy and mutually orthogonal
vectors. These unitary matrices reside in $\mathbb{G}(\Tc,K)$ and each
is interpreted as a representation of the subspace it spans.  This
method achieves the {\em maximum} DoF $K(T-K)$ over $T$
time-slots. Note that the DoF coincides with the
dimensionality of the Grassmannian $\mathbb{G}(\Tc,K)$.

\subsubsection{Subspace Decoding}
Unlike coherent communication, in non-coherent signaling the
information is embedded in the subspaces instead of the signal
values. As long as two matrices span the same subspace, they
correspond to the same message.  Maximum-likelihood decoding chooses
the codeword whose corresponding subspace is the closest one to the
subspace spanned by the received signal. For example
in~\cite{Hochwald2000}, the received signals are projected on the
subspaces spanned by different codewords, and then the one is chosen
with the maximum projection energy. More precisely, for the
transmitted signals $\bX_i\in \mathcal{C}^{K\times \Tc}$ from a unitary
codebook $\mathcal{X}$, and the received signals $\mathbf{Y} \in
\mathcal{C}^{K\times \Tc}$, the ML detector is
\begin{equation}
\hat{\bX}_{ML} = \arg \max_{\bX_i\in \mathcal{X}}
tr\{\mathbf{Y} \bX_i^{\dag}\bX_i\mathbf{Y}^{\dag}\}. \label{eq:MLDecoder}
\end{equation}

\subsection{A Baseline Scheme: Orthogonal Transmission}
\label{sec:baseline}

For the purposes of establishing a baseline for comparison, we begin
by considering a time-sharing (orthogonal transmission) that acquires
CSIR via training in each interval and uses Gaussian signaling.  This
baseline method has been chosen to highlight the differences of the
heterogeneous MIMO broadcast channel of this paper with two other
known scenarios: It is known that for a broadcast channel with no CSIT and
perfect CSIR, orthogonal transmission ahieves the optimal DoF
region~\cite{Huang2009}. Also, a training-based method with Gaussian
signaling is sufficient to achieve DoF optimality~\cite{Zheng2002} for
the point-to-point noncoherent MIMO channel\footnote{Grassmannian
  signaling is superior, but the same slope of the rate vs. SNR curve
  is obtained with training and Gaussian signaling in the
  point-to-point MIMO channel.}.

In orthogonal transmission, the transmitter communicates with the two
receivers in a time-sharing manner.  When transmitting to the dynamic
receiver, it is optimal if the transmitter activates only $K$ out of
$\NTx$ antennas: it sends pilots from the $K$ antennas sequentially
over the first $K$ time-slots; the dynamic receiver estimates the
channel by using, e.g., minimum-mean-square-error (MMSE)
estimation. Then, the transmitter sends data during the remaining
$(\Tc-K)$ time-slots, and the dynamic receiver decodes the data by
using the estimated channel
coefficients~\cite{Zheng2002,Hassibi2003}. Using this strategy, the
maximum rate achieved by the dynamic receiver is:
\begin{equation}
K(1-\frac{K}{\Tc}) \log\rho + O(1). \label{eq:OTD}
\end{equation}
The operating point in the achievable DoF region where the transmitter
communicates exclusively with the dynamic receiver is denoted with
$\mathcal{D}_1$.
\begin{equation}
\mathcal{D}_1 = \big(K(1-\frac{K}{T}),\, 0 \big). \label{eq:OTS}
\end{equation}

For the static receiver the channel is assumed to be known at the
receiver, therefore data is transmitted to it coherently. 
The maximum rate achieved
by the static receiver is~\cite{Telatar1999}
\begin{equation}
\min(\NTx,\Nc) \log \rho + O(1).
\end{equation}
The operating point in the DoF
region where the transmitter communicates only with the static
receiver is denoted with $\mathcal{D}_2$.
\begin{equation}
\mathcal{D}_2 = \big(0,\, \min(\NTx,\Nc) \big).
\end{equation}

Time-sharing between the two points of $\mathcal{D}_1$ and
$\mathcal{D}_2$ yields the achievable DoF region
\begin{equation}
\bigg( tK(1-\frac{K}{T}),\, (1-t)\min(\NTx,\Nc)
\bigg), \label{eq:dofTDMA}
\end{equation} 
where $t$ is a time-sharing variable.

\section{Grassmannian Superposition for Broadcast Channel}
\label{sec:Broadcasting1}

In this section, we propose a signaling method that attains DoF region
superior to orthogonal transmission, and allows each receiver to
decode its message while being oblivious of the other receiver's message.

\subsection{A Toy Example}
\label{sec:toy1}

Consider $\NTx=\Nc=2$, $\Nn=1$ and $\Tc=2$.
From Section~\ref{sec:baseline}, orthogonal transmission attains
$1/2$ DoF per time-slot for the dynamic receiver and $2$ DoF per
time-slot for the static receiver. By time-sharing between the two
receivers, the following DoF region is achieved
\begin{equation}
(\frac{t}{2},\ 2-2t),
\end{equation}
where $t\in [0,1]$ is a time-sharing parameter.

We now consider the transmitter sends a product of signal vectors over
$2$ time-slots
\begin{equation}
\Tx= \mathbf{x}_2 \mathbf{x}_1^t\in \mathcal{C}^{2\times 2},
\end{equation}
where $\mathbf{x}_1=[x_1^{(1)} \ x_2^{(1)} ]^t$ and $\mathbf{x}_2 =
[x_1^{(2)}\ x_2^{(2)}]^t$ are the signals for the dynamic receiver and
the static receiver, respectively. The vectors $\mathbf{x}_1$ and
$\mathbf{x}_2$ have unit-norm and from codebooks that lie on
$\mathbb{G}(2,1)$.

The signal at the dynamic receiver is 
\begin{align}
\mathbf{y}_1& = [h_1^{(1)}\ h_2^{(1)}] \begin{bmatrix}x_1^{(2)}\\ x_2^{(2)}\end{bmatrix}
[x_1^{(1)}\ x_2^{(1)}] + \frac{1}{\sqrt{\rho}}[w_1^{(1)}\ w_2^{(1)}] \nonumber\\ 
& = \tilde{h}^{(1)}\,[x_1^{(1)}\ x_2^{(1)}] +
\frac{1}{\sqrt{\rho}}[w_1^{(1)}\ w_2^{(1)}], \label{eq:toy1rx1}
\end{align}
where $[h_1^{(1)}, h_2^{(1)}]$ is the isotropically distributed
channel vector, and $\tilde{h}^{(1)}$ is the equivalent channel
coefficient seen by the dynamic receiver.

The subspace spanned by $\mathbf{x}_1^t$ is the same as
$\tilde{h}^{(1)}\mathbf{x}_1^t$, so at high SNR the dynamic receiver
is able to determine the direction specified by $\mathbf{x}_1^t$.
From Section~\ref{sec:noncoherent}, the dynamic receiver attains $1/2$
DoF per time-slot, which is optimal even in the absence of the static
receiver.

Consider the signal of the static receiver at time-slot $1$:
\begin{equation}
\mathbf{y}_2=
\Hc \begin{bmatrix}x_1^{(2)}\\ x_2^{(2)}\end{bmatrix}x_1^{(1)}
+ \frac{1}{\sqrt{\rho}} \begin{bmatrix}w_1^{(2)}\\ w_2^{(2)}\end{bmatrix}.
\end{equation}
Because the static receiver knows $\Hc$, it can invert the
channel\footnote{The noise enhancement induced by channel inversion
  will not affect the DoF of the static receiver.} as
long as $\Hc$ is non-singular:
\begin{equation}
\big(\Hc^{-1}\mathbf{y}_2\big)^t = x_1^{(1)} [x_1^{(2)} \ x_2^{(2)}] +
    [w_1^{(2)}\ w_2^{(2)}]\Hc^{-t}.
\end{equation}
The equivalent (unknown) channel seen by the static receiver is
$x_1^{(1)}$, i.e., part of the dynamic receiver's signal. Using
Grassmannian signaling via the subspace of $\mathbf{x}_2$, the DoF
achieved is again $1/2$ per time-slot.

Time-sharing between the proposed scheme and $\mathcal{D}_2$ (transmitting
only to the static receiver) yields the achievable DoF region
\begin{equation}
\big(\frac{1}{2}t,\ 2-\frac{3}{2}t\big).
\end{equation}
The above region is strictly larger than that of orthogonal
transmission, as shown in Figure~\ref{fig:dofToy1}.  The static
receiver achieves $1/2$ DoF ``for free'' in the sense that this DoF
was extracted for the static receiver without reducing the dynamic
receiver's DoF.
\begin{figure}
\centering
\includegraphics[width=3.5in]{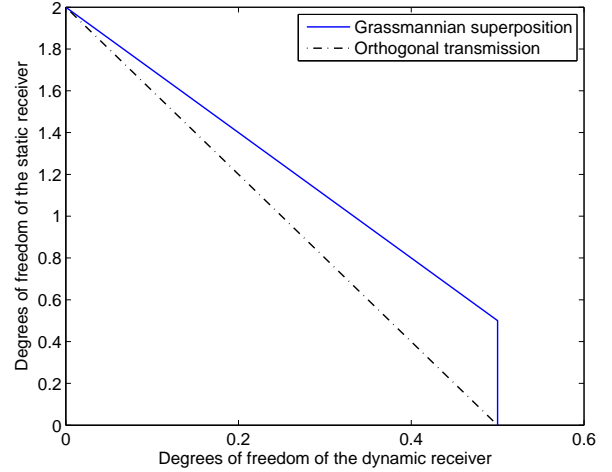}
\caption{DoF region of the toy example 1.}
\label{fig:dofToy1}
\end{figure}

\subsection{Grassmannian Superposition Signaling}

Based on the previous example, we design a general signaling method
(the Grassmannian superposition) with two properties: (1)~information
is carried by subspaces and (2)~two signal matrices are superimposed
multiplicatively so that their row (or column) space is unaffected by
multiplying the other receiver's signal matrix. Two separate cases are
considered based on whether the number of static receiver antennas is
larger than the number of dynamic receiver antennas.

\subsubsection{$\Nn<\Nc$}
\label{section:NnLessNc}
The transmitter sends $\Tx\in\mathcal{C}^{\Nc\times \Tc}$ across
$\NTx=\Nc$ antennas over an interval of length $\Tc$:
\begin{equation}
\Tx= \sqrt{\frac{\Tc}{\Nn}} \Xc \Xn, \label{eq:Tx1}
\end{equation}
where $\Xn\in \mathcal{C}^{\Nn\times \Tc}$ and $\Xc\in
\mathcal{C}^{\Nc\times \Nn}$ are the signals for the dynamic receiver
and the static receiver, respectively. Here, $\sqrt{\Tc/\Nn}$ is a
normalizing factor to satisfy the power
constraint~\eqref{eq:powerconstraint}. Information for both receivers
are sent over the Grassmannian, namely $\Xn$ is from a codebook
$\mathcal{X}_1\subset \mathbb{G}(\Tc,\Nn)$ and $\Xc$ is from a
codebook $\mathcal{X}_2\subset \mathbb{G}(\Nc,\Nn)$. The codebook
$\mathcal{X}_1$ and $\mathcal{X}_2$ are chosen to be isotropically
distributed unitary matrices (see Section~\ref{sec:codebook1} for more
details).

A sketch of the argument for the DoF achieved by the Grassmannian
superposition is as follows. The noise is negligible at high SNR, so
the signal at the dynamic receiver is approximately
\begin{equation}
\Yn \approx
\sqrt{\frac{\Tc}{\Nn}} \Hnc \Xc \Xn \in \mathcal{C}^{\Nn\times \Tc}. \label{eq:Y_1Intutitive}
\end{equation}
The row space of $\Xn$ can be determined based on $\Yn$, and then
$(\Tc-\Nn)\Nn$ independent variables (DoF) that specify
the row space are recovered, i.e., the transmitted point $\Xn$ in
$\mathcal{X}_1\in \mathbb{G}(\Tc,\Nn)$ is found.

For the static receiver, since $\Hc$ is known by the receiver, it
inverts the channel (given that $\Hc$ is non-singular)
\begin{equation}
\Hc^{-1}\Yc \approx \sqrt{\frac{\Tc}{\Nn}} \Xc \Xn\in
\mathcal{C}^{\Nc\times \Tc},
\end{equation}
which has approximately the same column space as $\Xc$. The
transmitted point $\Xc$ in $\mathcal{X}_2\in \mathbb{G}(\Nc,\Nn)$ will
be recovered from the column space of $\Hc^{-1}\Yc$, producing
$(\Nc-\Nn)\Nn$ DoF.

Therefore, the proposed scheme attains the DoF pair
\begin{equation}
\mathcal{D}_3 = \bigg(\Nn(1-\frac{\Nn}{\Tc}),\,  \frac{\Nn}{\Tc}(\Nc-\Nn)  \bigg). 
\end{equation}
The result is more formally stated as follows:
\begin{theorem}[$\Nn<\Nc$]
\label{thm:G-G}
Consider a broadcast channel with an $M$-antenna transmitter, a
dynamic receiver and a static receiver with $\Nn$ and $\Nc$ antennas,
respectively, with coherence time $\Tc$ for the dynamic channel. The
Grassmannian superposition achieves the rate pair
\begin{align*}
\begin{cases}
\Rnc = \Nn\big(1-\frac{\Nn}{\Tc}\big)\log\rho + O(1)
\\ \Rc = \frac{\Nn}{\Tc} (\Nc-\Nn)
\log\rho + O(1)
\end{cases}.
\end{align*}
The corresponding DoF pair is denoted $$\mathcal{D}_3=\bigg(\Nn(1-\frac{\Nn}{\Tc}),\,  \frac{\Nn}{\Tc}(\Nc-\Nn) \bigg).$$ If we denote
the DoF for the single-user operating points for the dynamic and
static user with $\mathcal{D}_1$, $\mathcal{D}_2$ respectively, the
achievable DoF region consists of the convex hull of $\mathcal{D}_1$,
$\mathcal{D}_2$ and $\mathcal{D}_3$.
\end{theorem}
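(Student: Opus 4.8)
The plan is to (a)~verify the power constraint, (b)~establish the rate of the dynamic receiver by reducing its decoding to non-coherent subspace detection on $\mathbb{G}(\Tc,\Nn)$, (c)~establish the rate of the static receiver by the analogous argument on $\mathbb{G}(\Nc,\Nn)$ after channel inversion, and (d)~obtain the DoF region by time-sharing the three corner points. For the power constraint, $\Tx=\sqrt{\Tc/\Nn}\,\Xc\Xn$ gives $\Tx\Tx^{\dagger}=\tfrac{\Tc}{\Nn}\Xc\Xn\Xn^{\dagger}\Xc^{\dagger}=\tfrac{\Tc}{\Nn}\Xc\Xc^{\dagger}$ (using $\Xn\Xn^{\dagger}=\mathbf{I}_{\Nn}$), and since $\Xc$ has orthonormal columns, $\tr(\Xc\Xc^{\dagger})=\tr(\Xc^{\dagger}\Xc)=\Nn$, so $\tr(\Tx\Tx^{\dagger})=\Tc$ deterministically, matching~\eqref{eq:powerconstraint}.

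For the dynamic receiver, conditioning on the transmitted $\Xc$, the observation is $\Yn=\widetilde{\mathbf{H}}_1\Xn+\tfrac{1}{\sqrt{\rho}}\Wn$ with $\widetilde{\mathbf{H}}_1:=\sqrt{\Tc/\Nn}\,\Hnc\Xc\in\mathcal{C}^{\Nn\times\Nn}$. Two facts drive the argument: (i)~$\widetilde{\mathbf{H}}_1$ is full rank almost surely, because $\det(\Hnc\Xc)$ is a polynomial in the entries of $\Hnc$ that is not identically zero (it equals $1$ at $\Hnc=\Xc^{\dagger}$), so it vanishes only on a null set; and (ii)~consequently the row space of $\widetilde{\mathbf{H}}_1\Xn$ equals the row space of $\Xn$, i.e.\ the transmitted point of $\mathcal{X}_1\subset\mathbb{G}(\Tc,\Nn)$ is unaffected by the interfering $\Xc$ and by $\Hnc$. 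Thus the dynamic receiver faces a non-coherent MIMO channel with effective parameters $M=N=\Nn$ and coherence $\Tc$, for which $K=\min(\Nn,\Nn,\lfloor\Tc/2\rfloor)=\Nn$ by $\Tc\ge 2\Nn$; with the i.d.\ unitary input $\Xn$ the non-coherent capacity results of~\cite{Zheng2002,Marzetta1999} give $\Rnc=\Nn(1-\Nn/\Tc)\log\rho+O(1)$. Concretely: at high SNR the dominant $\Nn$-dimensional right-singular subspace of $\Yn$ is within $O(1/\sqrt{\rho})$ of $\mathrm{rowspan}(\Xn)$ (the gap argument uses the smallest singular value of $\widetilde{\mathbf{H}}_1$, which is $O(1)$), so a minimum-distance decoder on $\mathbb{G}(\Tc,\Nn)$ resolves any codebook with separation $\gg 1/\sqrt{\rho}$, and $\dim\mathbb{G}(\Tc,\Nn)=\Nn(\Tc-\Nn)$ such codewords can be packed per block of $\Tc$ symbols. (For Rayleigh $\Hnc$ the conditioned channel $\Hnc\Xc$ is again i.i.d.\ $\mathcal{CN}(0,1)$, so~\cite{Zheng2002} applies verbatim; for a general i.i.d.\ $\Hnc$ one uses that the non-coherent pre-log depends only on a.s.\ invertibility of the effective channel.)

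For the static receiver, $\Hc$ is a known non-singular $\Nc\times\Nc$ matrix (here $\NTx=\Nc$), so it forms $\mathbf{Z}:=\Hc^{-1}\Yc=\sqrt{\Tc/\Nn}\,\Xc\Xn+\tfrac{1}{\sqrt{\rho}}\Hc^{-1}\Wc$. The inversion noise is colored but has a fixed, SNR-independent covariance, hence does not affect the pre-log. Since $\Xn$ has full row rank $\Nn$, $\mathrm{colspan}(\Xc\Xn)=\mathrm{colspan}(\Xc)$ regardless of $\Xn$, and $(\Xc\Xn)(\Xc\Xn)^{\dagger}=\Xc\Xc^{\dagger}$, so the nonzero singular values of $\sqrt{\Tc/\Nn}\,\Xc\Xn$ are all equal to $\sqrt{\Tc/\Nn}$. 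Therefore the dominant $\Nn$-dimensional left-singular subspace of $\mathbf{Z}$ is within $O(1/\sqrt{\rho})$ of the transmitted point of $\mathcal{X}_2\subset\mathbb{G}(\Nc,\Nn)$, and the same minimum-distance/packing argument (with the roles of row and column space interchanged) resolves $\rho^{\dim\mathbb{G}(\Nc,\Nn)(1+o(1))}=\rho^{\Nn(\Nc-\Nn)(1+o(1))}$ codewords per block of $\Tc$ symbols, giving $\Rc=\tfrac{\Nn}{\Tc}(\Nc-\Nn)\log\rho+O(1)$; note the static receiver never needs $\Xn$ and is oblivious to the dynamic receiver's message. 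Finally, $\mathcal{D}_1=(\Nn(1-\Nn/\Tc),0)$ (with $K=\Nn$) and $\mathcal{D}_2=(0,\Nc)$ are achieved by the single-user baselines of Section~\ref{sec:baseline}, $\mathcal{D}_3$ by the scheme above; time-sharing attains every convex combination and rate reduction attains everything dominated by one, so the convex hull of $\mathcal{D}_1,\mathcal{D}_2,\mathcal{D}_3$ is achievable.

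The main obstacle is promoting the two ``subspace estimate $\Rightarrow$ sphere packing'' heuristics to genuine coding theorems with the stated $\log\rho$ slope and $O(1)$ remainder. The cleanest route is to recognize each as an instance of the non-coherent point-to-point MIMO capacity/DoF result and to verify the hypotheses: for the dynamic receiver, a.s.\ full rank of $\widetilde{\mathbf{H}}_1$; for the static receiver, the observation that the unknown factor $\Xn$ enters only through a fixed-norm, full-rank, codeword-independent linear map and the inversion noise has SNR-independent covariance, so that $\Xn$ behaves like a benign deterministic channel and the pre-log is governed solely by $\dim\mathbb{G}(\Nc,\Nn)$.
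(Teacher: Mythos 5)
Your overall architecture (power check, dynamic user as a non-coherent point-to-point channel, static user as a ``dual'' subspace channel after inversion, time-sharing for the hull) matches the paper's intent, and your geometric observations --- a.s.\ invertibility of $\Hnc\Xc$, preservation of $\mathrm{rowspan}(\Xn)$ and $\mathrm{colspan}(\Xc)$, unit singular values of $\Xc\Xn$ --- are all correct. But the step you yourself flag as ``the main obstacle'' is not actually closed by the route you propose, and this is a genuine gap rather than a presentational one. For the static receiver, the effective ``channel'' multiplying the codeword $\Xc$ is $\Xn$, an isotropically distributed \emph{unitary} matrix (and the inversion noise is colored). The non-coherent point-to-point capacity results of Zheng--Tse/Marzetta--Hochwald that you want to invoke ``as an instance'' are proved for i.i.d.\ Rayleigh block fading; their hypotheses cannot be verified here, so the reduction does not go through. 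The fallback sphere-packing/minimum-distance argument is a dimension-counting heuristic, not a coding theorem: the Gaussian noise has unbounded support, so a single-block nearest-subspace decoder over a $\rho^{\Nn(\Nc-\Nn)}$-point packing has non-vanishing error probability, and to convert this into an achievable rate with an $O(1)$ remainder you must code across blocks and bound a mutual information. Your closing claim that ``$\Xn$ behaves like a benign deterministic channel'' also cuts the wrong way: if $\Xn$ were known and deterministic the static pre-log would be $\Nn\Nc$, not $\Nn(\Nc-\Nn)$; the loss of the $\Nn^2$ coefficient dimensions is exactly the part that needs proof.

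The paper closes this gap by a direct differential-entropy computation: it upper-bounds $h(\Hc^{-1}\Yc'\mid\Xc,\Hc)$ column-by-column using a lemma that each column of an i.d.\ unitary matrix has autocorrelation $\bI/\Tc$ together with Weyl's eigenvalue inequality (to handle the colored inversion noise), and lower-bounds $h(\Hc^{-1}\Yc'\mid\Hc)$ via the Jacobian of $\Xc\mapsto\Xc\Xn'$ and the volume of the Stiefel manifold $\mathbb{F}(\Nc,\Nn)$; the same template is used for the dynamic receiver, which also matters because the paper assumes only i.i.d.\ zero-mean unit-variance entries for $\Hnc$, not Rayleigh --- so even your dynamic-receiver reduction is only valid verbatim in the Rayleigh special case, and your assertion that ``the pre-log depends only on a.s.\ invertibility'' is left unproved (one also needs, e.g., $\Expt[\log|\det(\Hnc\Xc)|]$ finite for the $O(1)$ term to exist). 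To repair your proof you would need to supply these entropy bounds, at which point you have essentially reconstructed the paper's Appendix A.
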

\begin{proof}
See Appendix~\ref{app:thmdof1}.
\end{proof}

From Theorem~\ref{thm:G-G} the static receiver attains a ``free''
rate of
\begin{equation}
 \Delta R_1 = \frac{\Nn}{\Tc} (\Nc-\Nn) \log\rho + O(1). \label{eq:R_2Gain1}
\end{equation}

\begin{figure}
\centering
\includegraphics[width=3.5in]{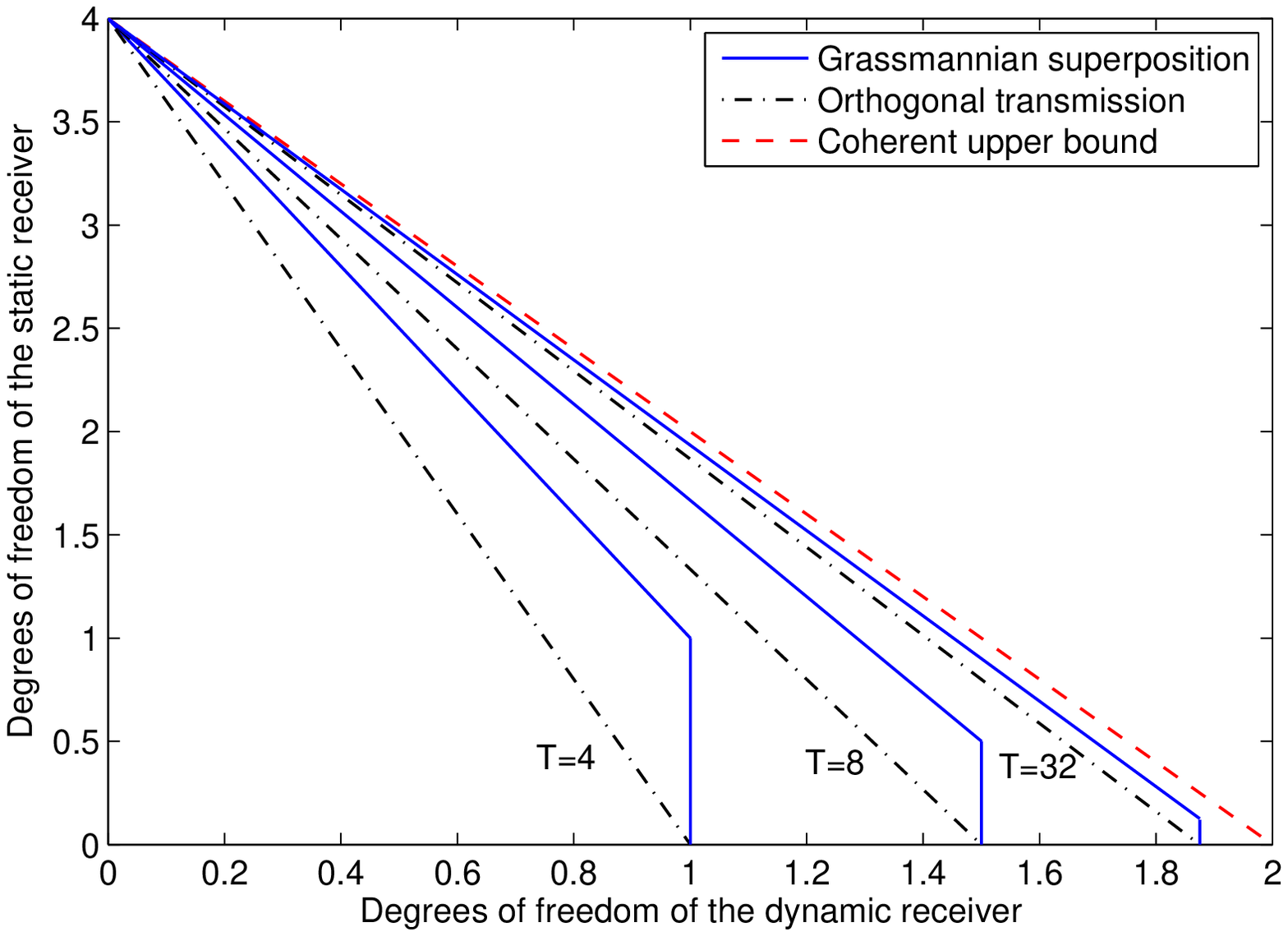}
\caption{DoF region (Theorem~\ref{thm:G-G}): $\Nn=2,\,\Nc=4$.}
\label{fig:GrassmannianDoF}
\end{figure}

We plot the achievable DoF region of Theorem~\ref{thm:G-G} in
Figure~\ref{fig:GrassmannianDoF}. For small $\Tc$, the DoF gain
achieved by the proposed method is significant, while as $\Tc$
increases, both methods approach the coherent upper
bound~\cite{Huang2009} where both of the receivers have CSIR. For
$\Tc\rightarrow \infty$, the rate gain $\Delta R_1 = O(1)$, and no DoF
gain is obtained. In this case, the achievable DoF region in
Theorem~\ref{thm:G-G} coincides with that attained by orthogonal
transmission as well as the coherent outer
bound~\cite{Huang2009}. This is not surprising, since if the channel
remains constant ($\Tc\rightarrow \infty$), the resource used for
obtaining CSIR is negligible. Finally, the rate gain $\Delta R_1$ is
an increasing function of $(\Nc-\Nn)$, i.e., the extra antennas
available for the static receiver.

Now, we design the dimension of $\Xn$ and $\Xc$ in~\eqref{eq:Tx1} to
maximize the achievable DoF region. To find the optimal dimensions, we
allow the signaling to use a flexible number of antennas and time
slots, up to the maximum available. Let $\Xn\in
\mathcal{C}^{\hat{\Nn}\times \hat{T}}$ and $\Xc\in
\mathcal{C}^{\hat{\Nc}\times \hat{\Nn}}$, where $\hat{T}\le T$,
$\hat{\Nn}\le \Nn$ and $\hat{\Nc}\le \Nc$.  Theorem~\ref{thm:G-G} does
not immediately reveal the optimal values of $\hat{\Nn}$, $\hat{\Nc}$,
and $\hat{T}$, because the rates are not monotonic in the mentioned
parameters.  The following corollary presents the optimal value of
$\hat{\Nn}$, $\hat{\Nc}$ and $\hat{T}$.

\begin{corollary}
\label{cor:dof1}
For the Grassmannian superposition under $\Nn < \Nc$, the signal dimension $\hat{T}= T$,
$\hat{\Nn}= \Nn$ and $\hat{\Nc}= \Nc$ optimizes the achievable DoF
region.
\end{corollary}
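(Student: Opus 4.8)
The plan is to write the DoF pair of the generalized scheme explicitly, reduce the optimization over $(\hat{\Nn},\hat{\Nc},\hat{T})$ to a one-dimensional question, and close with an elementary inequality. Running the argument of Theorem~\ref{thm:G-G} with $\Xn\in\mathcal{C}^{\hat{\Nn}\times\hat{T}}$ and $\Xc\in\mathcal{C}^{\hat{\Nc}\times\hat{\Nn}}$: the effective channel seen by the dynamic receiver remains full column rank (generically), so it recovers the row space of $\Xn$, a point of $\mathbb{G}(\hat{T},\hat{\Nn})$; after inverting $\Hc$ the static receiver recovers the column space of $\Xc$, a point of $\mathbb{G}(\hat{\Nc},\hat{\Nn})$. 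By Remark~\ref{remark:dimGrassmann} these carry $\hat{\Nn}(\hat{T}-\hat{\Nn})$ and $\hat{\Nn}(\hat{\Nc}-\hat{\Nn})$ complex dimensions in $\hat{T}$ channel uses, so the scheme attains $\mathcal{D}_3(\hat{\Nn},\hat{\Nc},\hat{T})=\big(\hat{\Nn}(1-\hat{\Nn}/\hat{T}),\,(\hat{\Nn}/\hat{T})(\hat{\Nc}-\hat{\Nn})\big)$, which is admissible whenever $\hat{\Nn}\le\hat{\Nc}\le\Nc$, $\hat{\Nn}\le\Nn$ and $2\hat{\Nn}\le\hat{T}\le T$ (if $\hat{T}<2\hat{\Nn}$ one uses $\lfloor\hat{T}/2\rfloor$ antennas, another admissible triple). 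Writing $\mathcal{D}_3=\mathcal{D}_3(\Nn,\Nc,T)$, it suffices to show that the (downward-closed) convex hull of $\mathcal{D}_1$, $\mathcal{D}_2$ and all the $\mathcal{D}_3(\hat{\Nn},\hat{\Nc},\hat{T})$ equals that of $\mathcal{D}_1$, $\mathcal{D}_2$, $\mathcal{D}_3$ (the inclusion $\supseteq$ being trivial).

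First I would discard the two easy parameters. The first coordinate of $\mathcal{D}_3(\hat{\Nn},\hat{\Nc},\hat{T})$ is independent of $\hat{\Nc}$ and the second is increasing in $\hat{\Nc}$, so every point is componentwise dominated by its $\hat{\Nc}=\Nc$ version; fix $\hat{\Nc}=\Nc$. For a fixed $\hat{\Nn}$, eliminating $\hat{T}$ between the two coordinates shows that the points $\mathcal{D}_3(\hat{\Nn},\Nc,\hat{T})$ all lie on the single line $d_2=(\Nc-\hat{\Nn})(1-d_1/\hat{\Nn})$, and that increasing $\hat{T}$ slides the point toward larger $d_1$. Hence on the admissible range $2\hat{\Nn}\le\hat{T}\le T$ these points form a line segment with left endpoint $\big(\hat{\Nn}/2,(\Nc-\hat{\Nn})/2\big)$ (at $\hat{T}=2\hat{\Nn}$) and right endpoint $\mathcal{D}_3(\hat{\Nn},\Nc,T)$ (at $\hat{T}=T$). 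By convexity of the target region it is enough to place both endpoints, for every integer $1\le\hat{\Nn}\le\Nn$, inside it.

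The crux is therefore to show that both endpoints satisfy $d_1\le\Nn(1-\Nn/T)$ and lie on or below the chord joining $\mathcal{D}_2=(0,\Nc)$ to $\mathcal{D}_3$, since these conditions (with $d_2\ge0$) characterize membership in the achievable region of Theorem~\ref{thm:G-G}. The bound on $d_1$ is immediate: $x\mapsto x(1-x/T)$ is increasing on $[0,T/2]$ and $\hat{\Nn}\le\Nn\le T/2$ because $\Tc\ge2\Nn$. Substituting an endpoint into the equation of the chord and clearing the positive factor $\Nn(T-\Nn)$, the remaining assertion is a polynomial inequality in $(\hat{\Nn},\Nn,\Nc,T)$; since it holds with equality at $\hat{\Nn}=\Nn$, the associated quadratic factors as $(\Nn-\hat{\Nn})\,\ell(\hat{\Nn})$ with $\ell$ affine, and the claim reduces to $\ell\ge0$ over the feasible range $1\le\hat{\Nn}\le\Nn$, which I would verify from $\Nc>\Nn$ and $\Tc\ge2\Nn$. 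I expect this to be the main obstacle: the inequality is tight at $\hat{\Nn}=\Nn$, the hypothesis $\Tc\ge2\Nn$ is genuinely needed, and the integrality of $\hat{\Nn}$ is used near the boundary, so no crude monotonicity bound will close the gap. Once it is established, combining it with the two reductions and the convexity of the region proves the corollary and identifies $\mathcal{D}_3$ as the only product-superposition vertex that must be retained.
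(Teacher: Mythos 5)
Your two reductions are sound, and in fact more careful than the paper's own argument: fixing $\hat{\Nc}=\Nc$ by componentwise dominance, observing that for fixed $\hat{\Nn}$ the points $\mathcal{D}_3(\hat{\Nn},\Nc,\hat{T})$ sweep a segment of the line $d_2=(\Nc-\hat{\Nn})(1-d_1/\hat{\Nn})$, and reducing to the two endpoints of that segment are all correct (the paper instead maximizes the chord slope $f(x,y)$ by calculus on a continuous relaxation). The gap is exactly where you flagged it: the decisive inequality is asserted, not proved, and it is false. For the right endpoint $\hat{T}=T$, clearing denominators in the chord condition gives $(\hat{\Nn}-\Nn)\,g(\hat{\Nn})\le 0$ with $g(x)=x(\Nn\Nc-\Nn T-T\Nc)+T\Nc(T-\Nn)$ affine and decreasing, so you need $g(\Nn-1)\ge 0$; this fails for admissible parameters. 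Concretely take $\Nn=5$, $\Nc=6$, $T=10$ (so $\Nn<\Nc$ and $T=2\Nn$): then $\mathcal{D}_3=(2.5,\,0.5)$, the chord from $\mathcal{D}_2=(0,6)$ is $d_2=6-2.2\,d_1$, while $\hat{\Nn}=4$, $\hat{\Nc}=6$, $\hat{T}=10$ achieves $(2.4,\,0.8)$ by Theorem~\ref{thm:G-G}, and $0.8>6-2.2\cdot 2.4=0.72$. This achievable point lies strictly outside the convex hull of $\mathcal{D}_1$, $\mathcal{D}_2$, $\mathcal{D}_3$, so no argument along your lines (or any other) can close the step: the corollary is false as stated. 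Integrality does not rescue it, since $\hat{\Nn}=\Nn-1$ is already an integer.

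The failure is not yours alone; it traces back to an algebra error in the paper. The stated derivative $\partial f(x,T)/\partial x=-[(T-\Nc)x^2+T\Nc x-T^2\Nc]/[x^2(T-x)^2]$ should have $2T\Nc x$ in the middle term; with the correct numerator $(\Nc-T)x^2-2T\Nc x+T^2\Nc$, the slope $f(\cdot,T)$ attains an interior maximum on $(0,\Nn)$ for parameters such as those above, so $x=\Nn$ need not be optimal even in the continuous relaxation. (Your left endpoints $(\hat{\Nn}/2,(\Nc-\hat{\Nn})/2)$ do always satisfy the chord condition under $T\ge 2\Nn$; only the $\hat{T}=T$ endpoints break.) A correct version of the statement must either restrict the parameters so that $g(\Nn-1)\ge 0$ (e.g.\ $\Nc$ sufficiently large relative to $\Nn$ and $T$), or replace ``the optimal choice is $(\Nn,\Nc,T)$'' by the convex hull of $\mathcal{D}_1$, $\mathcal{D}_2$, and \emph{all} admissible $\mathcal{D}_3(\hat{\Nn},\Nc,\hat{T})$.
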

\begin{proof}
See Appendix~\ref{app:cordof1}.
\end{proof}

Thus, in the special case of $\Nn < \Nc$, it is optimal to use all
time slots and all antennas. 

\subsubsection{$\Nn\ge \Nc$}

In this case, we shall see that sometimes the Grassmanian
superposition may still outperform orthogonal transmission, but also
under certain conditions (e.g. very large $T$ or $\Nn \gg\Nc$) the
Grassmannian superposition as described in this section may be not
improve the DoF compared with orthogonal transmission.

When $\Nn\ge \Nc$, if the Grassmannian signaling to the dynamic
receiver uses all the $\Nn$ dimensions, there will remain no room for
communication with the static receiver.
To allow the static user to also use the channel, the dynamic user
must ``back off'' from using all the rate available to it, in other
words, the dimensionality of the signaling for the dynamic receiver
must be reduced. The largest value of $\hat\Nn$ that makes $\hat{\Nn}
< \Nc$ and thus allows nontrivial Grassmannian superposition is
$\hat\Nn=\Nc-1$. Once we are in this regime, the results of the
subsection~\ref{section:NnLessNc} can be used. Specifically,
Corollary~\ref{cor:dof1} indicates that de-activating any further
dynamic user antennas will not improve the DoF region. Thus, given
$\Nc$, and assuming we wish to have a non-trivial Grassmannian
signaling for both users, using $\hat{\Nn} = \Nc-1$ dimensions for
signaling to the dynamic receiver maximizes the DoF region. The
transmit signal is then
\begin{equation}
\Tx= \sqrt{\frac{\Tc}{\Nn}} \Xc\Xn,
\end{equation}
where $\Xn\in\mathcal{C}^{(\Nc-1)\times \Tc}$ and
$\Xc\in\mathcal{C}^{\Nc\times (\Nc-1)}$. The corresponding achievable
DoF pair is
\begin{align}
\mathcal{D}_4 & = \bigg((\Nc -1)(1-\frac{\Nc -1}{\Tc}),\, (\Nc -1)/\Tc \bigg),
\end{align}
which leads to the following result.

\begin{corollary}[$\Nn\ge\Nc$]
\label{cor:G-G}
Consider an $M$-antenna transmitter broadcasting to a dynamic receiver
and a static receiver with $\Nn$ and $\Nc$ antennas, respectively,
with coherence time $\Tc$ for the dynamic channel. Then the
Grassmannian superposition achieves the rate pair
\begin{align*}
\begin{cases}
\Rnc = (\Nc-1)\big(1-\frac{\Nc-1}{\Tc}\big)\log\rho + O(1)
\\ \Rc = \frac{\Nc-1}{\Tc} 
\log\rho + O(1)
\end{cases}.
\end{align*}
Denote the corresponding DoF pair with ${\mathcal
  D}_4$. Together with the two single-user operating points $\mathcal{D}_1$ and
$\mathcal{D}_2$ obtained earlier, the achievable DoF
region consists of the convex hull of $\mathcal{D}_1$, $\mathcal{D}_2$
and $\mathcal{D}_4$.
\end{corollary}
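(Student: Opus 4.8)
The plan is to obtain Corollary~\ref{cor:G-G} as a direct specialization of Theorem~\ref{thm:G-G}: the transmit scheme is exactly the product superposition $\Tx=\sqrt{\Tc/\Nn}\,\Xc\Xn$ of Section~\ref{section:NnLessNc}, but with the dynamic-user codeword $\Xn\in\mathcal{C}^{(\Nc-1)\times\Tc}$ drawn as an i.d.\ unitary matrix on $\mathbb{G}(\Tc,\Nc-1)$ and the static-user codeword $\Xc\in\mathcal{C}^{\Nc\times(\Nc-1)}$ drawn as an i.d.\ unitary matrix on $\mathbb{G}(\Nc,\Nc-1)$ --- the choice $\hat\Nn=\Nc-1$ being, as already argued in the text, the largest signalling dimension that still leaves room for a non-trivial superposition when $\Nn\ge\Nc$ (invoking Corollary~\ref{cor:dof1} to rule out further de-activation). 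One first checks that this signal obeys the power constraint~\eqref{eq:powerconstraint}: $\tr(\Tx\Tx^\dag)=\tfrac{\Tc}{\Nn}(\Nc-1)\le\Tc$ since $\Nc-1<\Nn$, so it is feasible (with slack, which costs nothing in DoF). The rest of the argument then mirrors the proof of Theorem~\ref{thm:G-G} in Appendix~\ref{app:thmdof1} almost verbatim.

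For the static receiver, inverting its full-rank channel gives $\Hc^{-1}\Yc\approx\sqrt{\Tc/\Nn}\,\Xc\Xn$, whose column space coincides with that of $\Xc$; the noise enhancement from channel inversion only affects the $O(1)$ term, so recovering the point $\Xc\in\mathbb{G}(\Nc,\Nc-1)$ delivers $\dim\mathbb{G}(\Nc,\Nc-1)=(\Nc-1)\cdot 1$ complex dimensions over $\Tc$ slots, i.e.\ $\Rc=\tfrac{\Nc-1}{\Tc}\log\rho+O(1)$. For the dynamic receiver, at high SNR $\Yn\approx\sqrt{\Tc/\Nn}\,(\Hnc\Xc)\Xn$, so one needs the equivalent channel $\Hnc\Xc\in\mathcal{C}^{\Nn\times(\Nc-1)}$ to have full column rank $\Nc-1$; but since $\Nn\ge\Nc$ the channel $\Hnc$ acts injectively on $\mathcal{C}^{\Nc}$ (with probability one --- the same type of full-rank fact for $\Hnc$ used in the proof of Theorem~\ref{thm:G-G}), hence $\Hnc\Xc$ inherits the rank $\Nc-1$ of the unitary factor $\Xc$ and the row space of $\Xn$ is preserved. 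Subspace/ML decoding~\eqref{eq:MLDecoder} then recovers $\Xn\in\mathbb{G}(\Tc,\Nc-1)$, carrying $\dim\mathbb{G}(\Tc,\Nc-1)=(\Nc-1)(\Tc-\Nc+1)$ complex dimensions over $\Tc$ slots, i.e.\ $\Rnc=(\Nc-1)\!\left(1-\tfrac{\Nc-1}{\Tc}\right)\log\rho+O(1)$. This establishes the achievability of $\mathcal{D}_4$.

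It remains only to note that $\mathcal{D}_1$ and $\mathcal{D}_2$ are achievable by the orthogonal baseline of Section~\ref{sec:baseline}, and that any convex combination of achievable DoF pairs is itself achievable by time-sharing across coherence blocks; hence the region achieved by this family of schemes is the convex hull of $\mathcal{D}_1$, $\mathcal{D}_2$, and $\mathcal{D}_4$, as claimed. I do not expect a serious obstacle here: the corollary is essentially a relabeling of Theorem~\ref{thm:G-G} with $\hat\Nn=\Nc-1$, and the only genuinely new bookkeeping is verifying that shrinking the dynamic signalling dimension below $\Nn$ --- so the dynamic receiver now has spare receive antennas and the equivalent channel $\Hnc\Xc$ becomes tall rather than square --- leaves every rank/genericity argument of Appendix~\ref{app:thmdof1} intact, which, as indicated above, it does and in fact makes slightly easier.
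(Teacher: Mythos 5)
Your proposal is correct and is essentially the paper's own proof, which consists of the single observation that Corollary~\ref{cor:G-G} follows by substituting $\Nc-1$ for $\Nn$ in Theorem~\ref{thm:G-G}; your verification of the power constraint and of the Grassmannian dimensions matches the intended argument. One caution on your closing remark: if the dynamic receiver keeps all $\Nn$ antennas, the noiseless signal $\Hnc\Xc\Xn$ has rank $\Nc-1<\Nn$ and is supported on a measure-zero subset of $\mathcal{C}^{\Nn\times\Tc}$, so the differential-entropy lower bound in Appendix~\ref{app:thmdof1} (the step that drops the noise and applies the Jacobian of the square map $\Xn\mapsto\Heq1\Xn$) does not carry over verbatim to a tall equivalent channel --- the clean fix is to deactivate dynamic receive antennas down to $\Nc-1$, which is exactly what the literal substitution $\Nn\to\Nc-1$ accomplishes.
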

\begin{proof}
The proof follows directly by replacing $\Nn$ with $(\Nc-1)$ in
Theorem~\ref{thm:G-G}.
\end{proof}

In Corollary~\ref{cor:G-G}, the DoF for the static receiver has not
been achieved for ``free'' but at the expense of reducing the DoF
for the dynamic receiver. The transmitter uses only $\Nc-1$ dimensions
for the dynamic receiver, which allows an extra DoF $(\Nc-1)/\Tc$ to
be attained at the static receiver.
If $\Nn-\Nc$ and $\Tc$ are small, then the DoF gain of the static
receiver outweighs the DoF loss for the dynamic, so that the overall
achievable DoF region will be superior to that of orthogonal
transmission. In contrast, if $\Nn \gg \Nc$ or $\Tc$ is large, the DoF
loss from the dynamic receiver may not be compensated by the DoF gain
from the static receiver, as illustrated by
Figure~\ref{fig:GrassmannianDoF1}. Therefore in the latter case
orthogonal transmission may do better.
The following corollary specifies the condition under which
Grassmannian superposition improves DoF region compared with
orthogonal transmission.
\begin{figure}
\centering
\includegraphics[width=3.5in]{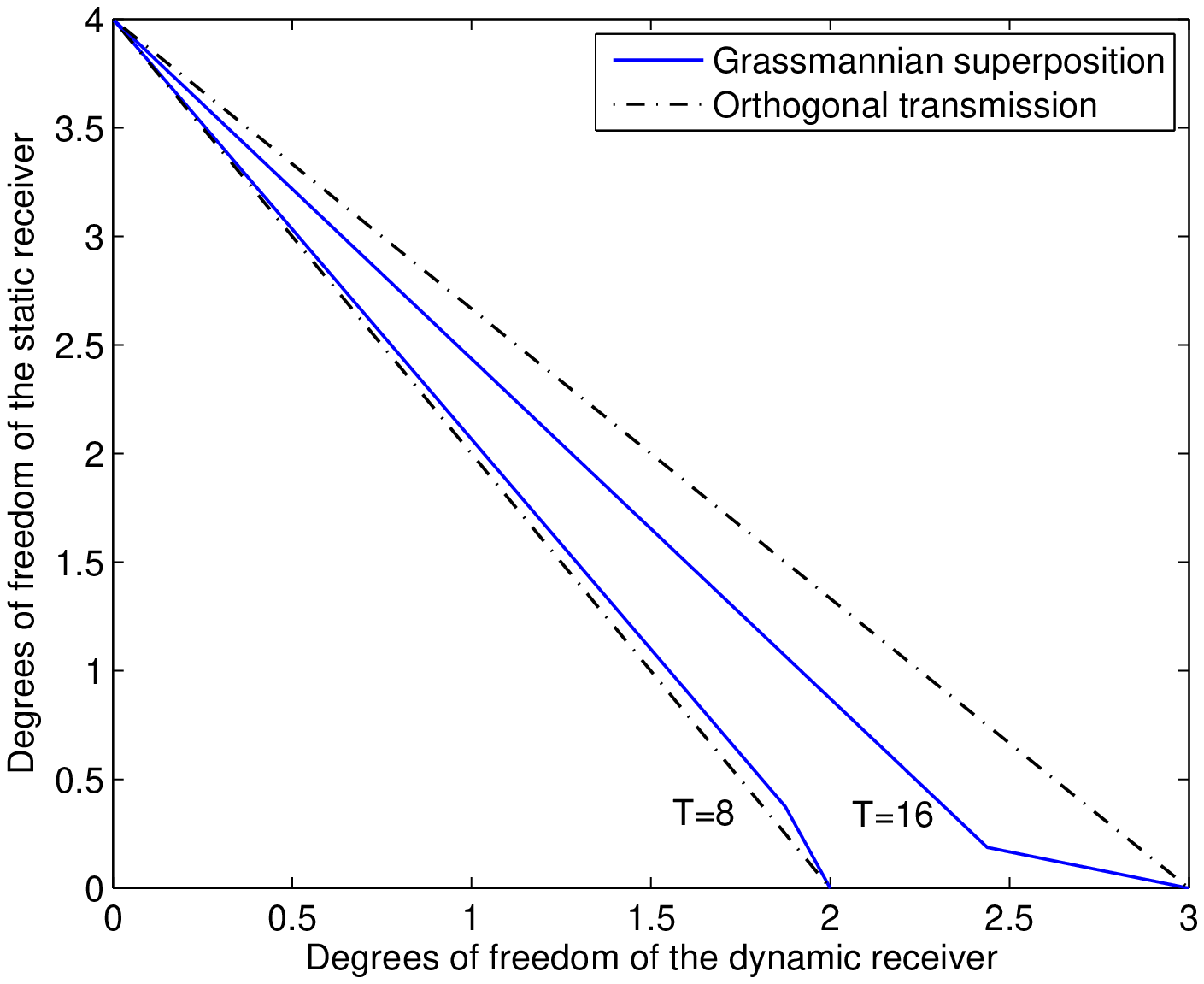}
\caption{DoF region (Corollary~\ref{cor:G-G}): $\Nn=\Nc=4$.}
\label{fig:GrassmannianDoF1}
\end{figure}

\begin{corollary}
\label{cor:dof11}
For $\Nn\ge \Nc$, the Grassmannian superposition improves DoF region
with respect to orthogonal transmission if and only if
\begin{equation}
\frac{\Nc - (\Nc -1)/\Tc}{\big(\Nc-1)(1-(\Nc-1)/\Tc\big)} < \frac{\Nc}{\Nn(1-\Nn/\Tc)}.
\end{equation}
\end{corollary}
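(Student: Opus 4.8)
The plan is to reduce the claim to a single geometric comparison of two convex polygons in the nonnegative quadrant. Under the standing hypotheses $\Nn\ge\Nc$ and $\Tc\ge 2\Nn$ we have $\NTx=\max(\Nn,\Nc)=\Nn$ and $K=\min(\NTx,\Nn,\lfloor\Tc/2\rfloor)=\Nn$, so the single-user corner points of Section~\ref{sec:baseline} specialize to $\mathcal{D}_1=(\Nn(1-\Nn/\Tc),\,0)$ and $\mathcal{D}_2=(0,\,\Nc)$. By~\eqref{eq:dofTDMA} orthogonal transmission achieves the downward closure of the segment $\overline{\mathcal{D}_1\mathcal{D}_2}$, while by Corollary~\ref{cor:G-G} the Grassmannian superposition achieves the downward closure of $\mathrm{conv}\{\mathcal{D}_1,\mathcal{D}_2,\mathcal{D}_4\}$ with $\mathcal{D}_4=\big((\Nc-1)(1-(\Nc-1)/\Tc),\,(\Nc-1)/\Tc\big)$. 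Both regions contain the two common endpoints $\mathcal{D}_1,\mathcal{D}_2$, so the superposition region strictly contains the orthogonal one (i.e.\ ``improves the DoF region'') if and only if $\mathcal{D}_4$ lies strictly outside the orthogonal region.

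Next I would locate $\mathcal{D}_4$ relative to $\mathcal{D}_1$ and $\mathcal{D}_2$. The map $x\mapsto x(1-x/\Tc)$ is strictly increasing on $[0,\Tc/2]$, and since $0\le\Nc-1<\Nn\le\Tc/2$ this gives $0<(\Nc-1)(1-(\Nc-1)/\Tc)<\Nn(1-\Nn/\Tc)$; likewise $0<(\Nc-1)/\Tc<\Nc$. Hence both coordinates of $\mathcal{D}_4$ lie strictly between the corresponding coordinates of $\mathcal{D}_1$ and $\mathcal{D}_2$, so (since neither $\mathcal{D}_1$ nor $\mathcal{D}_2$ dominates the other, the upper boundary of the orthogonal region is exactly $\overline{\mathcal{D}_1\mathcal{D}_2}$) the point $\mathcal{D}_4$ fails to be in the orthogonal region precisely when it lies strictly above the line through $\mathcal{D}_1$ and $\mathcal{D}_2$. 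Writing $a=\Nn(1-\Nn/\Tc)$ and $b=\Nc$, that line is $x/a+y/b=1$, and $\mathcal{D}_4$ is strictly above it iff $x_4/a+y_4/b>1$.

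Finally I would substitute the coordinates of $\mathcal{D}_4$ into this half-plane test, move the term containing $a$ to one side, and take reciprocals (legitimate since all four quantities involved are strictly positive once $\Nc\ge 2$, which is the only case in which $\mathcal{D}_4\neq\mathcal{D}_1$ to begin with), obtaining exactly
\[
\frac{\Nc-(\Nc-1)/\Tc}{(\Nc-1)\big(1-(\Nc-1)/\Tc\big)}<\frac{\Nc}{\Nn(1-\Nn/\Tc)}.
\]
The converse is immediate: if this inequality fails, $\mathcal{D}_4$ lies on or below $\overline{\mathcal{D}_1\mathcal{D}_2}$, hence in the orthogonal region, and the two convex hulls (and therefore their downward closures) coincide. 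I do not anticipate a genuine obstacle here; the only points requiring care are pinning down the meaning of ``improves the DoF region'' as strict set containment of the achievable regions and disposing of the degenerate case $\Nc=1$, both of which are routine bookkeeping.
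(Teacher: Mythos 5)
Your proposal is correct and follows essentially the same route as the paper: the paper's one-line proof states the condition as ``the slope between $\mathcal{D}_2$ and $\mathcal{D}_4$ exceeds the slope between $\mathcal{D}_1$ and $\mathcal{D}_2$,'' which is exactly your half-plane test $x_4/a+y_4/b>1$ for $\mathcal{D}_4$ lying strictly above the segment $\overline{\mathcal{D}_1\mathcal{D}_2}$, and both rearrange to the stated inequality. Your write-up is somewhat more careful than the paper's (verifying that $\mathcal{D}_4$'s coordinates are sandwiched between those of $\mathcal{D}_1$ and $\mathcal{D}_2$, and flagging the degenerate case $\Nc=1$), but the underlying argument is identical.
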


\begin{proof}
The necessary and sufficient condition for ensuring the improvement of
the achievable DoF region is that the slope between $\mathcal{D}_2$
and $\mathcal{D}_4$ is larger than the slope between $\mathcal{D}_1$
and $\mathcal{D}_2$, which is equivalent to the inequality in the
corollary.
\end{proof}

\subsection{Design of $\mathcal{X}_1$ and $\mathcal{X}_2$}
\label{sec:codebook1}

The representation of a point in the Grassmannian is not
unique~\cite{Edelman1998} (also see Remark~\ref{remark:dimGrassmann}),
and therefore the codebooks $\mathcal{X}_1\subset\mathbb{G}(T,\Nn)$
and $\mathcal{X}_2\subset\mathbb{G}(\Nc,\Nn)$ are not unique.

First, $\mathcal{X}_2$ is chosen to be a unitary codebook. When $\Xc$
is unitary, for i.i.d. Rayleigh fading $\Hnc$, the equivalent dynamic
channel $\widetilde{\bH}_1= \Hnc \Xc$ still has i.i.d. Rayleigh fading
coefficients~\cite{Marzetta1999}. Therefore, the static receiver is
{\em transparent} to the dynamic receiver, which allows us to decouple
and simplify the design of the two codebooks and their decoders.

Once $\mathcal{X}_2$ is chosen to be a set of unitary matrices,
communication between dynamic receiver and the transmitter is
equivalent to a non-coherent point-to-point MIMO channel. Hence, to
maximize the rate of the dynamic receiver at high SNR, $\mathcal{X}_1$
must also be a collection of isotropically distributed unitary
matrices (see Section~\ref{sec:Preliminaries}).

\begin{remark}
With unitary codebooks $\mathcal{X}_1$ and $\mathcal{X}_2$,
information for both receivers is conveyed purely by the the subspace
to which the codeword belongs. Consider $\bX\in\mathcal{C}^{k\times
  n}$, $n\ge k$, which is uniquely represented by $\Omega$ (the row
space of $\bX$)  and a $k\times k$ coefficient matrix $\bC$ according
to a certain basis of $\Omega$. The codewords $\bX_1,\bX_2$ can be
represented as
\begin{align}
\bX_1 &\rightarrow (\Omega_1, \bC_1), \nonumber\\
\bX_2 &\rightarrow (\Omega_2,\bC_2).
\end{align}
In a manner similar to~\cite{Zheng2002}, one can verify
\begin{align}
I (\Xn;\Yn) & = I(\Omega_1; \Yn)+ \underset{=0}{\underbrace{ I (\bC_1;
\Yn|\Omega_1)}},
\end{align}
and
\begin{align}
I (\Xc;\Yc|\Hc)  & = I(\Omega_2; \Yc|\Hc) \nonumber \\
& \quad + \underset{=0}{\underbrace{I (\bC_2;
\Yc|\Omega_2,\Hc)}}.  
\end{align}

\end{remark}

\subsection{Multiplicative vs. Additive Superposition }
\label{sec:multiplication}

In this section, we compare product superposition with additive
superposition. Under additive superposition, the transmit signal has a
general expression
\begin{equation}
\Tx = \sqrt{c_1\rho}\, \bV_1\,\Xn + \sqrt{c_2\rho}\, \bV_2\,\Xc,
\end{equation}
where $\bV_1$ and $\bV_2$ are the precoding matrices, and $c_1$ and
$c_2$ represent the power allocation. In this case, the signal at the
dynamic receiver is
\begin{equation}
\Yn = \sqrt{c_1\rho}\, \Hnc \bV_1 \Xn + \sqrt{c_2\rho}\, \Hnc \bV_2 \Xc + \Wn.
\end{equation}
Since $\Hnc$ is unknown, the second interference term cannot be
completely eliminated in general, which leads to a bounded
signal-to-interference-plus-noise ratio (SINR), resulting in zero DoF
for the dynamic receiver. 

For the multiplicative superposition, the signal at the dynamic
receiver is
\begin{align}
\Yn & = \sqrt{c\rho}\; \Hnc\Xc\Xn + \Wn \nonumber \\ 
& = \sqrt{c\rho}\;
\widetilde{\bH}_1\Xn + \Wn,
\end{align}
where $c$ is a power normalizing constant. For any unitary $\Xc$,
$\Xc\Xn$ and $\Xn$ span the same row space. This invariant property of
Grassmannian enables us to convey information to the static receiver
via $\Xc$ without reducing the degrees of freedom of the dynamic
receiver. Intuitively, the dynamic receiver does not have CSIR and is
``insensitive'' to rotation, i.e., the distribution of $\Yn$ does not
depend on $\Xc$.

For the static receiver, the received signal is
\begin{align}
\Yc =  \sqrt{c\rho} \,\Hc \Xc\Xn +  \Wc.
\end{align}
Because $\Hc$ is known, the channel rotation $\Xc$ is detectable,
i.e., the distribution of $\Yc$ depends on $\Xc$.  Therefore $\Xc$ can
be used to convey information for the static receiver.

\section{Grassmannian-Euclidean Superposition for the Broadcast Channel}
\label{sec:Broadcasting2}

We now propose a new transmission scheme based on successive
interference cancellation, where the static receiver decodes and
removes the signal for the dynamic receiver before decoding its own
signal. This scheme improves the DoF region compared to the
non-interfering Grassmannian signaling of the previous section.

\subsection{A Toy Example}


Consider $\NTx = \Nn = \Nc=1$ and $T=2$. Our approach is that over $2$
time-slots, the transmitter sends
\begin{equation}
\bx= x_2\,
\mathbf{x}_1^t\in \mathcal{C}^{1\times 2},
\end{equation}
where $\mathbf{x}_1=[x_1^{(1)} \ x_2^{(1)} ]^t$ is the signal for the
dynamic receiver and $x_2$ is the signal for the static
receiver. Here, $\mathbf{x}_1$ has unit-norm and is from a codebook
$\mathcal{X}_1$ that is a subset of $\mathbb{G}(2,1)$, and $x_2$ can
obey any distribution that satisfies the average power
constraint.

The signal at the dynamic receiver is
\begin{align}
\mathbf{y}_1& = h_1 x_2
[x_1^{(1)}\ x_2^{(1)}] + \frac{1}{\sqrt{\rho}}[w_1^{(1)}\ w_2^{(1)}] \\ 
& = \tilde{h}_1\,[x_1^{(1)}\ x_2^{(1)}] +
\frac{1}{\sqrt{\rho}}[w_1^{(1)}\ w_2^{(1)}], \label{eq:toy21}
\end{align}
where $h_1$ is the channel coefficient of the dynamic receiver, and
$\tilde{h}_1\triangleq h_1 x_2$ is the equivalent channel
coefficient. The dynamic receiver can determine the row space spanned
by $\mathbf{x}_1$ even though $\tilde{h}_1$ is unknown,
in a manner similar to Section~\ref{sec:toy1}. The total DoF conveyed
by $\mathbf{x}_1$ is $1$ (thus $\frac{1}{2}$ per time-slot); this is
the optimal DoF under the same number of antennas and coherence time.

For the static receiver, the received signal is:
\begin{align}
\mathbf{y}_2& = h_2 x_2
[x_1^{(1)}\ x_2^{(1)}] + \frac{1}{\sqrt{\rho}}[w_1^{(2)}\ w_2^{(2)}] \\ 
& = \tilde{h}_2\,[x_1^{(1)}\ x_2^{(1)}] +
\frac{1}{\sqrt{\rho}}[w_1^{(2)}\ w_2^{(2)}], \label{eq:toy22}
\end{align}
where $h_2$ is the channel coefficient of the static receiver, and
$\tilde{h}_2 \triangleq h_2x_2$. Intuitively, since~\eqref{eq:toy21}
and~\eqref{eq:toy22} are equivalent, if the dynamic receiver decodes
the subspace of $\mathbf{x}_1$, so does the static receiver. Then, the
exact signal vector $\mathbf{x}_1$ is known to the static receiver
(recall that each subspace is uniquely represented by a signal
matrix). The static receiver removes the interference signal
$\mathbf{x}_1$
\begin{align}
\mathbf{y}_2  \mathbf{x}_1^{\dag} & = h_2 x_2 + \frac{1}{\sqrt{\rho}} \tilde{w}_2, 
\label{eq:toy23}
\end{align}
where $\tilde{w}_2$ is the equivalent noise. Finally, the static
receiver knows $h_2$, so it decodes $x_2$ and attains $1/2$ DoF per
time-slot.

Therefore, the proposed scheme attains the maximum DoF for the dynamic
receiver, meanwhile achieving $1/2$ DoF for the static receiver. With
time sharing between this scheme and $\mathcal{D}_2$, the achievable
DoF pair is
\begin{equation}
(d_1, d_2) = \big(\frac{t}{2},\, 1-\frac{t}{2}\big).
\end{equation}
Figure~\ref{fig:DoFToy2} shows that this region is uniformly larger
than that of orthogonal transmission.

\begin{figure}
\centering
\includegraphics[width=3.5in]{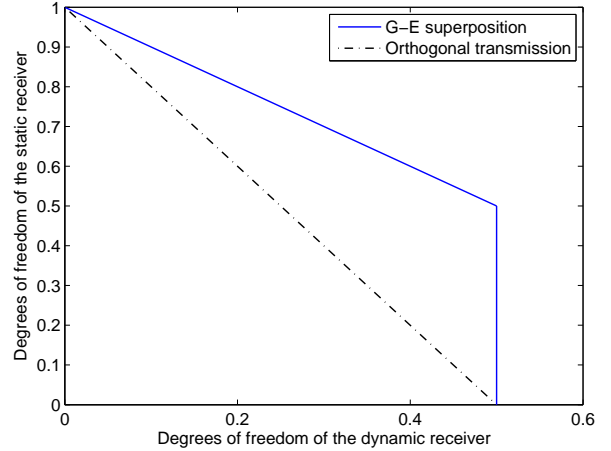}
\caption{DoF region of the toy example 2.}
\label{fig:DoFToy2}
\end{figure}

\begin{remark}
\label{remark: toyexample2}
There are two key differences between the method proposed here and the
Grassmannian superposition proposed in
Section~\ref{sec:Broadcasting1}. First, the information for the static
receiver is carried by the {\em value} of $x_2$ instead of its
direction (subspace), i.e., the signal for the static receiver is
carried in the Euclidean space. Second, the static receiver must
decode and remove the interference signal for the dynamic receiver
before decoding its own signal, which is unlike the non-interfering
method of the previous section.
\end{remark}

\subsection{Grassmannian-Euclidean Superposition Signaling}

We denote the aforementioned method as {\em Grassmannian-Euclidean
  superposition}, whose generalization is the subject of this
subsection.
Two separate cases are considered based on whether the number of
static receiver antennas is less than, or no less than, the number of
dynamic receiver antennas.

\subsubsection{$\Nn \le \Nc$}
The transmitter sends $\Tx\in \mathcal{C}^{\Nc\times \Tc}$
\begin{equation}
\Tx = \sqrt{\frac{\Tc}{\Nn\Nc}} \Xc \Xn, \label{eq:Tx21}
\end{equation}
where $\Xn\in\mathcal{C}^{\Nn\times\Tc}$ and
$\Xc\in\mathcal{C}^{\Nc\times\Nn}$ are signals for the dynamic
receiver and the static receiver, respectively. The signal $\Xn$ is
from a Grassmannian codebook $\mathcal{X}_1\subset
\mathbb{G}(\Tc,\Nn)$, while $\Xc$ is from a conventional Gaussian
codebook $\mathcal{X}_2$. The constant $\sqrt{\Tc/\Nn\Nc}$ is a
power normalizing factor.

We now give a sketch of the argument of the DoF attained by the
superposition signaling~\eqref{eq:Tx21}. For the dynamic receiver,
$\Yn\approx \Hnc \Xc \Xn$ at high SNR. When $\Nn\le\Nc$, the
equivalent channel $\Hnc \Xc \in \mathcal{C}^{\Nn\times \Nn}$ has full
rank and does not change the row space of $\Xn$. Recovering the row
space of $\Xn$ produces $(\Tc-\Nn)\Nn$ DoF, which is similar to
Section~\ref{sec:Broadcasting1}.

For the static receiver, the signal at high SNR is
\begin{equation}
\Yc\approx  \sqrt{\frac{\Tc}{\Nn\Nc}} \Hc\, \Xc\, \Xn =  \sqrt{\frac{\Tc}{\Nn\Nc}} \widetilde{\bH}_2\, \Xn.
\end{equation}
For $\Nn \le \Nc$, $\widetilde{\bH}_2=\Hc \Xc\in
\mathcal{C}^{\Nc\times\Nn}$ has full column rank and does not change
the the row space of $\Xn$, and therefore, the signal intended for the
dynamic receiver can be decoded by the static receiver.  From the
subspace spanned by $\Xn$, the codeword $\Xn\in\mathcal{X}_1$ is
identified. Then, $\Xn$ is peeled off from the static signal:
\begin{equation}
\Yc \Xn^{\dag} \approx \sqrt{\frac{\Tc}{\Nn\Nc}} \Hc \Xc \in
\mathcal{C}^{\Nc\times \Nn}. \label{eq:StaticRx2}
\end{equation}
Because $\Hc$ is known by the static receiver,
Eq.~\eqref{eq:StaticRx2} is a point-to-point MIMO
channel. Therefore, $\Nc\Nn$ DoF can be communicated via $\Xc$ to the
static receiver (over $\Tc$ time-slots)~\cite{Telatar1999}.

Altogether, the Grassmannian-Euclidean superposition attains the
DoF pair $\mathcal{D}_5$
\begin{equation}
\mathcal{D}_5 = \bigg( \Nn(1-\Nn/\Tc),\, \Nc\Nn/\Tc \bigg).
\end{equation}
More precisely, we have the following theorem.
\begin{theorem}[$\Nn\le\Nc$]
\label{thm:G-E}
Consider a broadcast channel with an $M$-antenna transmitter, a
dynamic receiver and a static receiver with $\Nn$ and $\Nc$ antennas,
respectively, with coherence time $\Tc$ for the dynamic channel. The
Grassmannian-Euclidean superposition achieves the rate pair
\begin{align*}
\begin{cases}
\Rnc = \Nn\big(1-\frac{\Nn}{\Tc}\big)\log\rho + O(1) \\ \Rc = \frac{\Nn\Nc}{\Tc} \log\rho + O(1)
\end{cases}.
\end{align*}
Denote the corresponding DoF pair by $\mathcal{D}_5$. Together with
the two single-user operating points $\mathcal{D}_1$, $\mathcal{D}_2$
obtained earlier, the achievable DoF region consists of the convex
hull of $\mathcal{D}_1$, $\mathcal{D}_2$ and $\mathcal{D}_5$.
\end{theorem}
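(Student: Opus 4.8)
The plan is to establish the two rate expressions of Theorem~\ref{thm:G-E} separately and then invoke time-sharing to obtain the convex hull. For the dynamic receiver, the first step is to argue that its mutual information scales as $\Nn(1-\Nn/\Tc)\log\rho + O(1)$. Since $\mathcal{X}_2$ is a Gaussian codebook and not unitary, the equivalent channel $\Heq1 = \Hnc\Xc$ is no longer i.i.d.\ Rayleigh; however, conditioned on $\Xc$, with probability one the matrix $\Xc\in\mathcal{C}^{\Nc\times\Nn}$ has full column rank (when $\Nn\le\Nc$), so $\Heq1$ is full-rank and hence isotropically distributed given that $\Hnc$ is i.d. I would then reduce the problem to a point-to-point noncoherent MIMO channel with $\Nn$ transmit dimensions, coherence $\Tc$, and an equivalent channel whose distribution differs from Rayleigh but retains full rank and appropriate moment conditions. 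The key is that the DoF of a noncoherent MIMO channel depends only on $K=\min(\Nn,\Nn,\lfloor\Tc/2\rfloor)=\Nn$ (using $\Tc\ge 2\Nn$) and not on the fine details of the channel law; this follows from the general model assumptions stated earlier and from~\cite{Zheng2002}. Care is needed because $\Xc$ is random and could be close to rank-deficient, but a standard argument conditioning on the event that the smallest singular value of $\Xc$ exceeds $\rho^{-\epsilon}$ (which has probability approaching one) controls this, leaving only an $O(1)$ loss.

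For the static receiver, I would make the high-SNR approximation $\Yc \approx \sqrt{\Tc/(\Nn\Nc)}\,\Hc\Xc\Xn$ rigorous by treating the noise term carefully. The argument proceeds in two stages: first, the static receiver decodes $\Xn$ from the row space of $\Yc$. Because $\Hc$ has full rank and $\Xc$ has full column rank (a.s.), $\widetilde{\bH}_2 = \Hc\Xc\in\mathcal{C}^{\Nc\times\Nn}$ has full column rank, so the received signal spans the same row space as $\Xn$ and the same subspace-decoding analysis used for the dynamic receiver applies — the static receiver incurs vanishing error probability in identifying $\Xn\in\mathcal{X}_1$. Second, having recovered $\Xn$ exactly (the subspace determines the unitary representative), the static receiver forms $\Yc\Xn^{\dag}$; since $\Xn\Xn^{\dag}=\bI_{\Nn}$ this yields $\sqrt{\Tc/(\Nn\Nc)}\,\Hc\Xc + \frac{1}{\sqrt\rho}\Wc\Xn^{\dag}$, a coherent point-to-point MIMO channel with known $\Hc$, input $\Xc$, and white residual noise $\Wc\Xn^{\dag}$. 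Invoking~\cite{Telatar1999}, the capacity is $\min(\NTx,\Nc)\log\rho + O(1)$; but here the effective input dimension is $\Nn$ (the number of columns of $\Xc$) and the channel $\Hc\Xc$ has rank $\min(\Nc,\Nn)=\Nn$, so one gets $\Nn$ spatial streams over $\Nn$ of the $\Tc$ symbols, yielding $\Nn\Nc/\Tc \cdot$... — more precisely $\Nn$ streams each carrying $\log\rho$ per symbol, amortized: actually the free dimensions in $\Xc\in\mathcal{C}^{\Nc\times\Nn}$ number $\Nc\Nn$ scalar Gaussian symbols over $\Tc$ channel uses, giving $\Nc\Nn/\Tc\cdot\log\rho + O(1)$. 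Writing this bookkeeping correctly is the one place to be meticulous.

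The main obstacle I anticipate is the first stage of the static receiver's decoding: I need the static receiver to reliably decode the \emph{dynamic} receiver's codeword $\Xn$, which it was not designed to serve, and this must hold with the same reliability as at the dynamic receiver despite $\widetilde{\bH}_2$ being a tall ($\Nc\times\Nn$) rather than square matrix, and despite the residual noise after channel inversion (or pseudo-inversion) of $\widetilde{\bH}_2$ being colored and correlated with $\Xc$. The cleanest route is \emph{not} to invert $\widetilde{\bH}_2$ but to perform ML subspace decoding directly on $\Yc$: the received signal lies (up to noise) in the row space of $\Xn$ regardless of $\widetilde{\bH}_2$, so the projection-based detector of~\eqref{eq:MLDecoder} applies verbatim, and its error exponent analysis from~\cite{Zheng2002} carries over since it depends only on $\widetilde{\bH}_2$ having full column rank. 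Conditioning on the a.s.\ event that $\Xc$ is well-conditioned then makes this rigorous. Once both rate points are established, the final step is immediate: $\mathcal{D}_5$ is achievable, $\mathcal{D}_1$ and $\mathcal{D}_2$ are achievable by the baseline schemes of Section~\ref{sec:baseline}, and any point in their convex hull is achievable by time-sharing, which completes the proof.
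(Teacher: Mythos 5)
Your proposal is correct and follows the same overall architecture as the paper's proof: lower-bound the dynamic receiver's rate at $\Nn(1-\Nn/\Tc)\log\rho+O(1)$, show the static receiver can first decode and strip off $\Xn$, and then treat the residual $\Yc\Xn^{\dag}$ as a coherent point-to-point MIMO channel giving $\Nn\Nc$ DoF over $\Tc$ slots via \cite{Telatar1999}, with time-sharing supplying the convex hull. Where you genuinely diverge is in how the two non-coherent sub-claims are certified. The paper does not invoke the point-to-point DoF result of \cite{Zheng2002} as a black box for the equivalent channel $\Hnc\Xc$; instead it redoes the differential-entropy bookkeeping directly --- upper-bounding $h(\Yn|\Xn)$ row by row via Gaussian maximum entropy, and lower-bounding $h(\Yn)$ through the Jacobian of $\Xn\mapsto\Hnc\Xc\Xn$ and the Stiefel-manifold volume --- which sidesteps the issue you flag, namely that $\Hnc\Xc$ is not Rayleigh (and that the paper's $\Hnc$ is not even assumed Gaussian). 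Your conditioning on the smallest singular value of $\Xc$ is a workable patch, but the paper's route needs only that $\Expt[\log\det(\Hnc\Xc)]$ is finite, which is cleaner. More importantly, for decodability of $\Xn$ at the static receiver the paper proves a mutual-information statement (Lemma~\ref{lemma:decodable}: $I(\Xn;\Yc|\Hc)\ge \Nn(\Tc-\Nn)\log\rho+O(1)$, obtained by an SVD of $\Hc$, keeping the $\Nn$ strongest singular modes, and repeating the entropy bounds), which immediately justifies successive decoding at the stated rate; your alternative via the projection detector \eqref{eq:MLDecoder} and its error exponent is plausible but turns a rate claim into a detector-analysis claim that you assert rather than carry out, and "the analysis of \cite{Zheng2002} carries over" would need to be verified for a tall, non-unitary, $\Xc$-dependent equivalent channel $\widetilde{\bH}_2=\Hc\Xc$. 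In short: same decomposition and same cancellation idea, but the paper's mutual-information lemma is the more robust instrument for the one step you correctly identified as the main obstacle.
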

\begin{proof}
See Appendix~\ref{app:thmdof2}.
\end{proof}

With the Grassmannian-Euclidean superposition, the static receiver
attains the following gain compared with orthogonal transmission:
\begin{equation}
 \Delta R_2=\frac{\Nn\Nc}{\Tc} \log\rho + O(1). \label{eq:R_2Gain2}
\end{equation}
\begin{figure}
\centering \includegraphics[width=3.5in]{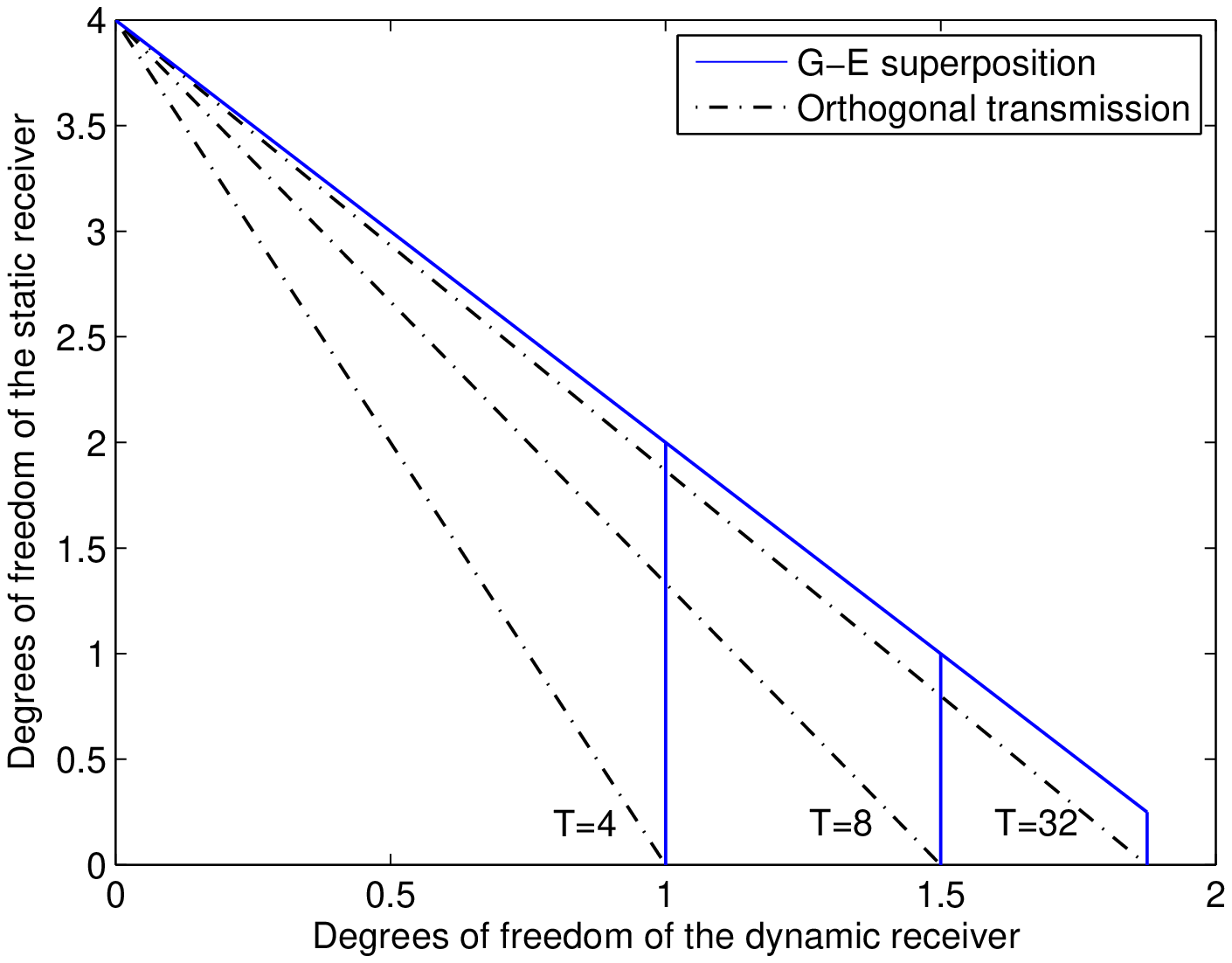}
\caption{DoF region (Theorem~\ref{thm:G-E}): $\Nn=2,\,\Nc=4$.}
\label{fig:GEDoF}
\end{figure}
From Figure~\ref{fig:GEDoF}, for relatively small $\Tc$ or large
$\Nc$, the DoF gain is significant. For example, at $\Tc=2\Nn$, the
minimum coherence interval considered in this paper, the proposed
method achieves a DoF $\Nc/2$ for the static receiver while attaining
the maximum DoF $\Nn/2$ for the dynamic receiver. As $\Tc$ increases
the gain over orthogonal transmission decreases. In the limit
$\Tc\rightarrow \infty$, we have $\Delta R_2 = O(1)$, and the DoF gain
of Grassmannian-Euclidean superposition goes away. The
Grassmannian-Euclidean superposition also provides DoF gain over the
non-interfering Grassmannian superposition\footnote{ Although
  Grassmannian-Euclidean superposition achieves larger DoF than the
  non-interfering Grassmannian superposition, it may not achieve
  larger rate at low or moderate SNR due to the decodable restriction
  on the rate (interference).}
\begin{equation}
 \Delta R = \frac{\Nn^2}{\Tc} \log\rho +
 O(1). \label{eq:R_2Gain3}
\end{equation}

The optimal design of the dimensions of $\Xn$ and $\Xc$ is trivial,
because the DoF region in Theorem~\ref{thm:G-E} is
indeed optimal (see Section~\ref{sec:dofregion}).

\subsubsection{$\Nn > \Nc$}
When the static receiver has fewer antennas than the dynamic receiver,
it may not be able to decode the dynamic signal. Here, we cannot
directly apply the signaling structure given by~\eqref{eq:Tx21}. A
straightforward way is to activate only $\Nc$ antennas at the
transmitter and use only $\Nc$ dimensions for the dynamic
receiver, that is
\begin{equation}
\Tx = \sqrt{\frac{\Tc}{\Nc^2}} \Xc \Xn
\in\mathcal{C}^{\Nc\times\Tc}, \label{eq:Tx22}
\end{equation}
where $\Xn\in\mathcal{C}^{\Nc\times T}$ and
$\Xc\in\mathcal{C}^{\Nc\times \Nc}$, and $\sqrt{\Tc/\Nc^2}$ is a power
normalizing factor. 

Following the same argument as the case of $\Nn\le \Nc$, the
Grassmannian-Euclidean superposition achieves the DoF pair
\begin{equation}
\mathcal{D}_6 = \bigg( \Nc(1-\frac{\Nc}{\Tc}),\, \frac{\Nc^2}{\Tc} \bigg).
\end{equation}

\begin{corollary}[$\Nn>\Nc$]
\label{cor:G-E}
Consider a broadcast channel with an $M$-antenna transmitter, a
dynamic receiver and a static receiver with $\Nn$ and $\Nc$ antennas,
respectively, with coherence time $\Tc$ for the dynamic channel.  The
Grassmannian-Euclidean superposition achieves the rate pair
\begin{align*}
\begin{cases}
\Rnc = \Nc\big(1-\frac{\Nc}{\Tc}\big)\log\rho + O(1)
\\ \Rc = \frac{\Nc^2}{\Tc}
\log\rho + O(1)
\end{cases}
\end{align*}
Denote the corresponding DoF pair with $\mathcal{D}_6$. Together with
the two single-user operating points $\mathcal{D}_1$ and
$\mathcal{D}_2$ obtained earlier, the achievable DoF region consists
of the convex hull of $\mathcal{D}_1$, $\mathcal{D}_2$ and
$\mathcal{D}_6$.
\end{corollary}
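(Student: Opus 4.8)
The plan is to deduce Corollary~\ref{cor:G-E} from Theorem~\ref{thm:G-E} by the same dimension-restriction device used to obtain Corollary~\ref{cor:G-G} from Theorem~\ref{thm:G-G}, namely by having the transmitter ``back off'' to $\Nc$ active antennas and $\Nc$ Grassmannian dimensions for the dynamic receiver. First I would point out why the construction~\eqref{eq:Tx21} breaks when $\Nn > \Nc$: the static receiver's equivalent channel $\widetilde{\bH}_2 = \Hc\Xc$ is then an $\Nc\times\Nn$ matrix of rank at most $\Nc < \Nn$, so it collapses the row space of $\Xn$ and the static receiver can no longer identify the Grassmannian codeword $\Xn$ in order to cancel it before decoding $\Xc$; moreover a full $\Nn$-dimensional $\Tx$ no longer even typechecks with the effective problem. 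The fix is exactly the signal~\eqref{eq:Tx22}, with the transmitter using $\Nc$ active antennas, $\Xn\in\mathcal{C}^{\Nc\times\Tc}$ drawn from a Grassmannian codebook $\mathcal{X}_1\subset\mathbb{G}(\Tc,\Nc)$, and $\Xc\in\mathcal{C}^{\Nc\times\Nc}$ from a Gaussian codebook.

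With the transmit dimension equal to $\Nc$, the resulting channel is formally the $\Nn = \Nc$ instance of Theorem~\ref{thm:G-E} (which is admitted, since that theorem only assumes $\Nn\le\Nc$), with the symbol ``$\Nn$'' there played by our $\Nc$. Concretely I would re-verify the three ingredients of that theorem's argument under the new dimensions: (i)~$\Tc\ge 2\Nc$, which holds because $\Tc\ge 2\Nn > 2\Nc$, so $\mathbb{G}(\Tc,\Nc)$ has dimension $\Nc(\Tc-\Nc)$ and the dynamic receiver recovers $\Nc(1-\Nc/\Tc)$ DoF per symbol from the row space of $\Xn$; (ii)~the equivalent dynamic channel $\Hnc\Xc\in\mathcal{C}^{\Nn\times\Nc}$ has full column rank (with probability one, exactly as in the proof of Theorem~\ref{thm:G-E}), so that row space survives at the dynamic receiver; and (iii)~$\widetilde{\bH}_2 = \Hc\Xc\in\mathcal{C}^{\Nc\times\Nc}$ is square and almost surely invertible --- because restricting the full-rank $\Hc$ to the $\Nc$ active antennas gives an invertible $\Nc\times\Nc$ block and $\Xc$ is full rank --- so the row space of $\Xn$ also survives at the static receiver. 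Consequently the static receiver decodes the same subspace as the dynamic receiver, identifies $\Xn\in\mathcal{X}_1$, peels it off as in~\eqref{eq:StaticRx2} to obtain $\Yc\Xn^{\dag}\approx\sqrt{\Tc/\Nc^2}\,\Hc\Xc$, and reads off $\Nc^2$ DoF over $\Tc$ symbols from this coherent point-to-point channel via~\cite{Telatar1999}. This yields the operating point $\mathcal{D}_6 = \big(\Nc(1-\Nc/\Tc),\,\Nc^2/\Tc\big)$, and time-sharing among $\mathcal{D}_1$, $\mathcal{D}_2$, $\mathcal{D}_6$ gives the claimed convex hull.

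The only delicate point --- more bookkeeping than a genuine obstacle --- is to confirm that demanding $\Xn$ be decodable at \emph{both} receivers costs nothing beyond the $O(1)$ terms, i.e.\ that the dynamic stream still has per-symbol pre-log $\Nc(1-\Nc/\Tc)$ as seen by each receiver. This holds because both equivalent channels $\Hnc\Xc$ and $\Hc\Xc$ are full rank, so the ML subspace detector~\eqref{eq:MLDecoder} behaves identically at the two receivers up to SNR-independent constants; since nothing in the Appendix~\ref{app:thmdof2} proof of Theorem~\ref{thm:G-E} relies on a strict inequality $\Nn < \Nc$, the corollary follows after the renaming $\Nn\mapsto\Nc$ with essentially no new computation --- which is why it is presented as a corollary rather than a separate theorem.
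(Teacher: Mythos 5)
Your proposal is correct and follows essentially the same route as the paper, which proves Corollary~\ref{cor:G-E} by deactivating all but $\Nc$ transmit antennas and invoking Theorem~\ref{thm:G-E} in its boundary case (the symbol $\Nn$ there replaced by $\Nc$), exactly the dimension-restriction device you describe. You simply spell out the rank and coherence-time checks that the paper leaves implicit in its one-line proof.
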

\begin{proof}
The proof directly follows from Theorem~\ref{thm:G-E}.
\end{proof}

In Corollary~\ref{cor:G-E}, the static rate receiver is obtained at
the expense of a reduction in the dynamic rate. The transmitter uses
only $\Nc$ out of $\Nn$ dimensions available for the dynamic receiver,
which allows extra DoF $\Nc^2/\Tc$ for the static receiver.
A necessary and sufficient condition for Grassmannian-Euclidean
superposition to improve the DoF region is as follows.

\begin{corollary}
\label{cor:dof3}
For the Grassmannian-Euclidean superposition, the signal dimension
$\hat{T}= T$, $\hat{\Nn}= \Nn$ and $\hat{\Nc}= \Nc$ optimizes the rate region at
high SNR. Moreover, it achieves superior DoF region
compared with orthogonal transmission if and only if
\begin{equation}
\Nc > (1-\frac{\Nn}{\Tc})\Nn
\end{equation}
\end{corollary}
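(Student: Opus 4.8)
The plan is to establish two things: (i) the stated rate pair is achievable, and (ii) the claimed optimality of the signal dimensions, together with the necessary-and-sufficient condition for beating orthogonal transmission. For (ii), I would first invoke Theorem~\ref{thm:G-E} and Corollary~\ref{cor:G-E} to reduce everything to a statement about the convex hull of $\mathcal{D}_1,\mathcal{D}_2$ and the product-superposition point. The argument for why $\hat T = T$, $\hat\Nn=\Nn$, $\hat\Nc=\Nc$ is optimal mirrors Corollary~\ref{cor:dof1}: I would write the achievable DoF point as a function of the free parameters $(\hat T,\hat\Nn,\hat\Nc)$ subject to $\hat T\le T$, $\hat\Nn\le\min(\Nn,\hat\Nc)$, $\hat\Nc\le\Nc$, namely $\big(\hat\Nn(1-\hat\Nn/\hat T),\ \hat\Nn\hat\Nc/\hat T\big)$, and show that increasing $\hat T$ to $T$ and $\hat\Nc$ to $\Nc$ weakly enlarges the convex hull (both coordinates are monotone in these two parameters once the other is fixed), while the role of $\hat\Nn$ is settled by the same ``back-off'' reasoning already used in Section~\ref{section:NnLessNc}—one checks that the upper-right boundary of the hull of $\{\mathcal{D}_1,\mathcal{D}_2,\mathcal{D}_5(\hat\Nn)\}$ is not improved by taking $\hat\Nn<\min(\Nn,\Nc)$.

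For the inequality, I would proceed exactly as in the proof of Corollary~\ref{cor:dof11}: the Grassmannian-Euclidean scheme strictly enlarges the achievable DoF region over orthogonal transmission if and only if the new corner point $\mathcal{D}_5$ (or $\mathcal{D}_6$ when $\Nn>\Nc$) lies strictly above the line segment joining $\mathcal{D}_1$ and $\mathcal{D}_2$. Writing out the ``point above the line'' condition with $\mathcal{D}_1=\big(K(1-K/\Tc),0\big)$, $\mathcal{D}_2=\big(0,\min(\NTx,\Nc)\big)=\big(0,\Nc\big)$ (since $\NTx=\max(\Nn,\Nc)\ge\Nc$), and $\mathcal{D}_5=\big(\Nn(1-\Nn/\Tc),\ \Nn\Nc/\Tc\big)$, and using $K=\min(\Nn,\NTx,\lfloor\Tc/2\rfloor)=\Nn$ in the regime $\Tc\ge 2\Nn$, the determinant/slope comparison collapses—after cancelling the common positive factor $\Nn(1-\Nn/\Tc)$—to $\Nc/\Tc > \ldots$, and a short manipulation yields $\Nc>(1-\Nn/\Tc)\Nn$. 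I would double-check that the same algebra, run with $\mathcal{D}_6=\big(\Nc(1-\Nc/\Tc),\Nc^2/\Tc\big)$ and $K=\Nc$ (the relevant single-user point when $\Nn>\Nc$, since $\min(\Nn,\Nc)=\Nc$ governs $\mathcal{D}_2$ but the dynamic single-user DoF is still $\Nn(1-\Nn/\Tc)$), produces the identical condition $\Nc>(1-\Nn/\Tc)\Nn$, so that the two cases merge into one clean statement.

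The main obstacle I anticipate is bookkeeping around which ``$K$'' and which single-user point $\mathcal{D}_1$ is active in each antenna regime, and making sure the optimality-of-dimensions claim is genuinely a statement about the \emph{convex hull} rather than about the raw corner point: a corner that moves down-and-to-the-right can still be dominated, so the monotonicity argument has to be phrased in terms of the Pareto boundary of the hull of all three points. Once that is set up carefully, each individual verification is a one-line slope inequality; the achievability half is already delivered by Theorem~\ref{thm:G-E} and Corollary~\ref{cor:G-E}, so no new coding argument is needed here.
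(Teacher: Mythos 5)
Your proposal is correct and follows essentially the same route as the paper: the paper likewise reduces the claim to the convex hull of $\mathcal{D}_1$, $\mathcal{D}_2$ and the superposition corner point, takes $\hat{N}_2=N_2$ because both rates in Corollary~\ref{cor:G-E} are increasing in $N_2$, and derives $N_2>(1-N_1/T)N_1$ by comparing the slope of the $\mathcal{D}_2$--$\mathcal{D}_6$ segment with that of the $\mathcal{D}_1$--$\mathcal{D}_2$ segment, which is exactly your ``point above the line'' test. The one step you flag but leave open---that increasing $\hat{T}$ moves the corner down-and-to-the-right, so raw monotonicity does not immediately enlarge the hull---is resolved in the paper by noting that the corner always lies on the line $d_1+d_2=N_2$ (the $\mathcal{D}_2$--$\mathcal{D}_6$ segment has slope $-1$ independent of $\hat{T}$), so every admissible $\hat{T}$ gives a boundary point and the region is nested in $\hat{T}$; this is precisely the Pareto-boundary care you anticipated.
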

\begin{proof}
First, using the maximum number of static antennas ($\hat{\Nc}=\Nc$)
is optimal, because both $R_1$ and $R_2$ in Corollary~\ref{cor:G-E}
are increasing functions of $\Nc$ (note that $\Nc\le \Tc/2$).

Second, we find the optimal $\hat{T}$.  Maximizing the achievable DoF
region is equivalent to maximizing the slope of the line between
$\mathcal{D}_2$ and $\mathcal{D}_4$, i.e.,
\begin{equation}
(0,\,\Nc)\quad \text{and}\quad 
\big(\Nc(1-\frac{\Nc}{\hat{T}}),\,\frac{\Nc^2}{\hat{T}}\big),
\end{equation}
which has a constant slope $-1$ and is independent of
$\hat{T}$. Therefore, any choice of $\hat{T}$, as long as $\hat{T}\ge 2\Nc$,
achieves a boundary point of the DoF region of the
Grassmannian-Euclidean superposition.

Finally, for the Grassmannian-Euclidean superposition to be superior
to orthogonal transmission in term of DoF, the slope of the line
between $\mathcal{D}_2$ and $\mathcal{D}_6$ must be larger than the
slope between $\mathcal{D}_1$ and $\mathcal{D}_2$, namely
\begin{equation}
\frac{\Nc}{(1-\Nn/T)\Nn} > 1.
\end{equation}
This completes the proof.
\end{proof}

Corollary~\ref{cor:dof3} can be interpreted as follows: the
Grassmannian-Euclidean superposition achieves superior DoF if and only
if the maximum DoF of the static receiver is larger than that of the
dynamic receiver.

\begin{figure}
\centering \includegraphics[width=3.5in]{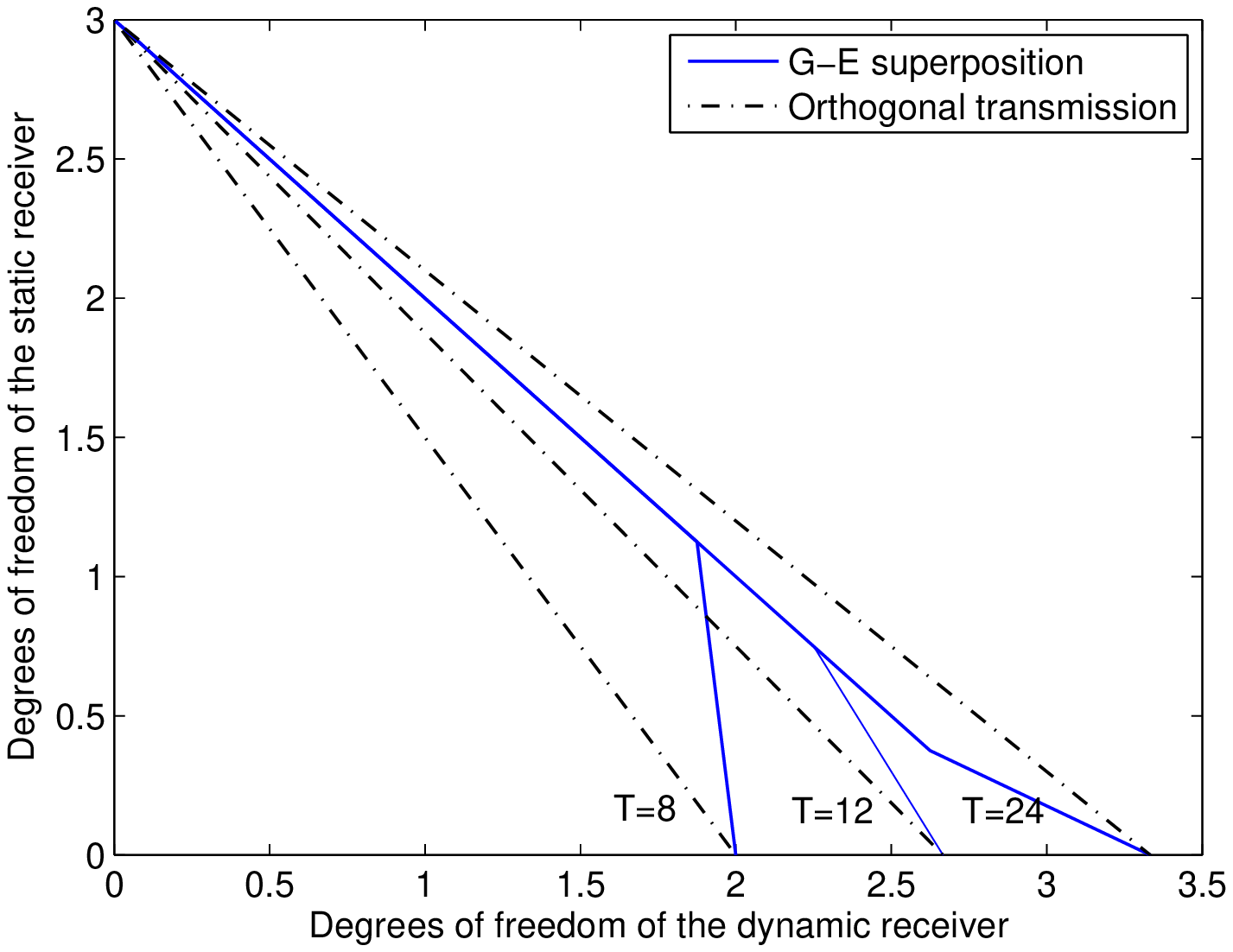}
\caption{DoF region (Corollary~\ref{cor:G-E}): $\Nn=4$,
  $\Nc=3$.}
\label{fig:GEDoF1}
\end{figure}

\subsection{Design of $\mathcal{X}_1$ and $\mathcal{X}_2$}
\label{sec:codebook2}

We heuristically argue that it is reasonable to choose $\mathcal{X}_1$
to be isotropically distributed unitary matrices and $\mathcal{X}_2$
to be i.i.d. complex Gaussian codebook. .

Recall that the Grassmannian-Euclidean superposition is to allow the
static receiver to decode the signal for the dynamic receiver and then
remove this interference. After interference cancellation, the static
receiver has an equivalent point-to-point MIMO channel with perfect
CSIR, in which case Gaussian signal achieves capacity.

Assuming $\Xc\in\mathcal{X}_2$ has i.i.d. $\mathcal{CN}(0,1)$ entries,
the equivalent channel for the dynamic receiver $\Hnc\Xc$ is
isotropically distributed (see Definition~\ref{def:id}), which leads
to two properties. First, for any $\Tc\times\Tc$ unitary matrix
$\mathbf\Phi$,
\begin{equation}
p(\Yn \mathbf{\Phi}\,|\,\Xn \mathbf{\Phi}) = p (\Yn\,|\, \Xn).
\end{equation}
Second, for any $\Nn\times\Nn$ unitary matrix $\mathbf \Psi$
\begin{equation}
p(\Yn\,|\, \mathbf{\Psi} \Xn) = p(\Yn\,|\, \Xn ).
\end{equation}
Based on these properties, the optimal signaling structure for the
channel of the dynamic receiver is a diagonal matrix\footnote{When the
  channel is i.i.d. Rayleigh fading this diagonal matrix should be
  identity at high SNR~\cite{Zheng2002}. However, it remains unknown
  whether the optimal choice is an identity matrix at arbitrary SNR.}
times a unitary matrix~\cite{Marzetta1999, Zheng2002}. Therefore,
choosing $\mathcal{X}_1$ to be isotropically distributed unitary
matrices is not far from optimal.

\subsection{Degrees of Freedom Region}
\label{sec:dofregion}

In this section, we show that the Grassmannian-Euclidean superposition
achieves the optimal DoF region under certain channel conditions.
\subsubsection{$\Nn\le \Nc$}
In this case, the optimal DoF region is as follows.
\begin{corollary}[$\Nn\le \Nc$]
\label{cor:dofregion}
When an $M$-antenna transmitter transmits to a dynamic receiver and a
static receiver with $\Nn$ and $\Nc$ antennas, respectively, with the
dynamic channel coherence time $\Tc$, the DoF region is:
\begin{align}
\begin{cases}
d_1 \le \Nn(1-\frac{\Nn}{\Tc})\\
\frac{d_1}{\Nn} + \frac{d_2}{\Nc}\le 1
\end{cases}.
\end{align}
\end{corollary}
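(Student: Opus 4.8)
The plan is to establish the region in two directions: achievability and converse. Achievability is already in hand --- Theorem~\ref{thm:G-E} gives the corner point $\mathcal{D}_5 = \big(\Nn(1-\Nn/\Tc),\, \Nn\Nc/\Tc\big)$, and together with $\mathcal{D}_2 = (0,\Nc)$ the convex hull contains the segment $\frac{d_1}{\Nn} + \frac{d_2}{\Nc} \le 1$ in the range $0 \le d_1 \le \Nn(1-\Nn/\Tc)$, while the constraint $d_1 \le \Nn(1-\Nn/\Tc)$ is exactly the facet through $\mathcal{D}_5$ parallel to the $d_2$-axis (and $\mathcal{D}_1$ lies inside this hull since $K = \min(\Nn,\Nc,\lfloor \Tc/2\rfloor) = \Nn$ here, so $\mathcal{D}_1 = \mathcal{D}_5$ projected to the axis). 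So the real work is the converse: every achievable $(d_1,d_2)$ must satisfy both inequalities.

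For the converse I would argue as follows. First, $d_1 \le \Nn(1-\Nn/\Tc)$ is the single-user non-coherent MIMO bound: giving receiver~1 the best possible help (e.g. a genie providing $\Hc$, or simply ignoring receiver~2), the dynamic receiver still faces a block-fading channel with coherence $\Tc$ and effective antenna count $K = \min(\NTx,\Nn,\lfloor\Tc/2\rfloor) = \Nn$, whose DoF is $K(1-K/\Tc) = \Nn(1-\Nn/\Tc)$ by~\cite{Zheng2002}. The second inequality is the crux. The natural route is a genie/degradedness argument: reveal the dynamic receiver's message $W_1$ (or its codeword $\Xn$) and the channel $\Hc$ to the static receiver, so the static receiver can subtract the dynamic signal; then give the static receiver's observation (and $\Hc$) to a virtual terminal that must also decode $W_1$. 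Concretely, I would bound $\Tc(d_1/\Nn + d_2/\Nc)$ by writing $R_1 + \frac{\Nn}{\Nc} R_2 \lesssim \frac{1}{\Tc}\big[ I(\Xn;\Yn) + \frac{\Nn}{\Nc} I(\Xc ; \Yc \,|\, \Hc, \Xn)\big]$ and showing the bracket is at most $\Nn(\Tc-\Nn)\log\rho + \Nn\cdot\frac{\Nn}{\Nc}\cdot\frac{\Nc}{?}\ldots$; more cleanly, I expect the right normalization is to combine the two users' rates so that the total number of "spatial-temporal" dimensions used is $\Nn \Tc$: the dynamic message consumes the row-space dimensions ($\Nn(\Tc-\Nn)$ of them) and the static message, after the dynamic codeword is known, can only live in the remaining $\Nn\cdot\Nn$ coherent dimensions, which over $\Tc$ slots and normalized by $\Nc$ antennas gives the weight $\Nn/\Nc$. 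Writing $d_1/\Nn + d_2/\Nc \le 1$ is then the statement that these two pieces sum to the total $\Nn\Tc/(\Nn\Tc)=1$.

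The main obstacle will be making the second bound rigorous without circularity: one must show that \emph{any} scheme achieving dynamic-DoF $d_1$ forces the static receiver's \emph{decodable} rate to shrink by a commensurate amount, even though the static receiver is not a priori required to decode $W_1$. The clean way is a converse in the spirit of the MIMO non-coherent capacity proofs: condition on $\Hc$ (known to receiver~2), use that $\Hc$ is full rank so $\Yc$ and $\Hc^{-1}\Yc \approx \Tx$ are informationally equivalent up to $O(1)$, and then observe that the transmitted $\Tc$-block $\Tx \in \mathcal{C}^{\Nc\times\Tc}$ has at most $\min(\Nc,\Nt)\Tc = \Nn\Tc$ (after the optimal antenna restriction) complex degrees of freedom to split between the two messages --- but crucially the part of $\Tx$ carrying $W_1$ must be non-coherently decodable by receiver~1, costing the $\Nn^2$ "pilot-like" dimensions, so $W_1$ gets at most $\Nn(\Tc-\Nn)$ and $W_2$ gets the rest. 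I would formalize "the rest" by a chain-rule bound $I(\Xn,\Xc;\Yc|\Hc) \ge I(\Xn;\Yn) - O(1)\cdot\text{(rate loss)} + I(\Xc;\Yc|\Hc,\Xn)$ and a dimension-counting upper bound on $I(\Xn,\Xc;\Yc|\Hc) \le \Nn\Tc\log\rho + O(1)$, which is the point-to-point coherent MIMO bound with $\min(\Nc,\Nt)=\Nn$ effective streams. Balancing these inequalities yields $(\Tc-\Nn)^{-1}$-free form $d_1/\Nn + d_2/\Nc \le 1$. I would close by noting the region's two corner points coincide exactly with $\mathcal{D}_5$ and $\mathcal{D}_2$, so achievability meets the converse and the characterization is tight.
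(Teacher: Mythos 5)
Your achievability argument and your first converse bound ($d_1 \le \Nn(1-\Nn/\Tc)$ via the single-user non-coherent DoF of \cite{Zheng2002}) match the paper. The gap is in your converse for the weighted-sum constraint $\frac{d_1}{\Nn}+\frac{d_2}{\Nc}\le 1$: what you offer there is a plan, not a proof. The chain-rule/dimension-counting argument is left with unresolved placeholders, and you yourself flag the central difficulty --- that the static receiver is not required to decode $W_1$, so the ``decodable rate must shrink commensurately'' step is exactly the part that needs a rigorous argument and never gets one. As written, the inequality $I(\Xn,\Xc;\Yc|\Hc) \ge I(\Xn;\Yn) - O(1)\cdot(\text{rate loss}) + I(\Xc;\Yc|\Hc,\Xn)$ is asserted rather than derived, and it is not clear it holds for an arbitrary code.

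The idea you are missing is much simpler and is what the paper uses: enhance the channel by granting the dynamic receiver perfect CSIR. More side information at a receiver cannot shrink the capacity region, so the resulting coherent MIMO broadcast channel with no CSIT and full CSIR at both receivers is an outer bound; its DoF region is already known from \cite{Huang2009} to be exactly $\frac{d_1}{\Nn}+\frac{d_2}{\Nc}\le 1$. Combined with the single-user bound $d_1\le \Nn(1-\Nn/\Tc)$ and the achievability you correctly set up (the segment from $\mathcal{D}_2=(0,\Nc)$ to $\mathcal{D}_5=\big(\Nn(1-\Nn/\Tc),\,\Nn\Nc/\Tc\big)$ lies on the line $\frac{d_1}{\Nn}+\frac{d_2}{\Nc}=1$), the region is tight. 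If you want to salvage your direct approach, you would essentially have to reprove the coherent outer bound of \cite{Huang2009} from scratch, which is a substantially harder undertaking than the corollary requires.
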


\vspace{0.1in}

\begin{proof}
An outer bound can be found when both receivers have CSIR. The DoF
region of the coherent upper bound is~\cite{Huang2009}
\begin{equation}
\frac{d_1}{\Nn} + \frac{d_2}{\Nc} \le 1. \label{eq:dofCSIR}
\end{equation}
An inner bound is attained by Grassmannian-Euclidean superposition,
which reaches the boundary of~\eqref{eq:dofCSIR} except for $d_1>
\Nn(1-\Nn/\Tc)$. However, the DoF of the dynamic receiver can never
exceed $\Nn(1- \Nn/\Tc)$ (see
Section~\ref{sec:Preliminaries}). Therefore, Grassmannian-Euclidean
superposition achieves the DoF region.
\end{proof}

\subsubsection{ $\Nn > \Nc$}

In this case, the Grassmannian-Euclidean superposition does not match the
coherent outer bound~\eqref{eq:dofCSIR}, however, we can partially
characterize the DoF region for broadcasting with degraded message
  sets~\cite{Korner1977} and in the case of the more capable
  channel~\cite{Gamal1979}. For both cases the capacity region
is characterized by:
\begin{align}
\begin{cases}
 R_1 &\le I(\mathbf{U};\Yn)\\ R_1 + R_2 &\le I(\Xc;\Yc|\mathbf{U}) +
 I(\mathbf{U};\Yn)\\ R_1 + R_2 &\le I(\Xc;\Yc)
\end{cases},
\end{align}
where $\mathbf U$ is an auxiliary random variable. From the last
inequality we have
\begin{equation}
R_1 + R_2 \le \Nc \log \rho + O(1), 
\end{equation}
that is
\begin{equation}
d_1 + d_2 \le \Nc. \label{eq:MoreCapableBound}
\end{equation}

When $\Nc \ge (1-\Nn/\Tc) \Nn $, the inner bound in
Corollary~\ref{cor:G-E} coincides with the outer
bound~\eqref{eq:MoreCapableBound} for $0\le d_1 \le \Nc(1-\Nc/\Tc)$,
therefore, the DoF is established for this range.  For
$d_1>\Nc(1-\Nc/\Tc)$, the inner and outer bounds do not match, but the
gap is small when $\Nc$ is close to $\Nn$.

When $\Nc <(1-\Nn/\Tc) \Nn $, the inner bound in
Corollary~\ref{cor:G-E} is inferior to orthogonal transmission and the
problem remains open.

\section{Conclusion}
Signal superposition based on a multiplicative structure was proposed
to improve the degrees of freedome of the MIMO broadcast channels when
one receiver has full CSIR while the other receiver has no CSIR. Two
superposition signaling methods were proposed, both based on product
superposition. In the Grassmannian superposition, the transmit signal
is a product of two Grassmannian codewords, producing higher DoF than
orthogonal transmission while reception is still interference-free at
both receivers. The Grassmannian-Euclidean superposition uses coherent
signaling for the receiver with CSIR, and Grassmannian signaling for
the receiver without CSIR. The latter method is shown to attain the
optimal DoF region under a broad set of channel conditions.

It is possible to extend the results of this paper to more than two
receivers. The set of receivers can be divided into two sets, one with
and one without CSIR. At each point in time, the transmitter uses
product superposition to broadcast to two users, one from each
group. A scheduler selects the pair of users that is serviced at each
time. The time-sharing parameters defining the overall rate region are
as follows: one parameter determines how long a given pair is serviced
(time sharing between pairs) and for each pair a parameter determines
the operating point of the degree-of-freedom region of that pair. To
facilitate the case where there are unequal number of dynamic and
static users, the pair memberships are allowed to be non-unique, i.e.,
there may be two or more pairs that contain a given receiver. The
overall rate region is the convex hull of all rate vectors
corresponding to all values of the time-sharing parameters mentioned
above.

\appendices 

\section{Proof of Theorem~\ref{thm:G-G}}
\label{app:thmdof1}
\subsection{Achievable Rate for the Dynamic Receiver}

The normalized received signal $\Yn\in \mathcal{C}^{\Nn\times \Tc}$ at
the dynamic receiver is
\begin{equation}
\Yn = \sqrt{\frac{\Tc}{\Nn}}\, \Hnc \Xc \Xn + \frac{1}{\sqrt{\rho}}\Wn, \label{eq:Rxnc} 
\end{equation}
where $\Hnc \in \mathcal{C}^{\Nn\times \Nc}$ is the dynamic channel,
$\Xn\in\mathcal{C}^{\Nn\times \Tc}$ and $\Xc\in\mathcal{C}^{\Nc\times
  \Nn}$ are the isotropically distributed, unitary signals for the dynamic and static receivers,
respectively, and $\Wn\in \mathcal{C}^{\Nn\times \Tc}$ is additive
Gaussian noise. 

Let $\Heq1 \triangleq \Hnc\Xc$ be the $\Nn\times \Nn$ equivalent
channel, and rewrite~\eqref{eq:Rxnc} as
\begin{equation}
\Yn = \sqrt{\frac{\Tc}{\Nn}}\Heq1  \Xn +
\frac{1}{\sqrt{\rho}}\Wn.  \label{eq:EqChU1}
\end{equation}
The elements in $\Heq1 $ are
\begin{equation}
 \tilde{h}_{ij} = [\Heq1 ]_{i,j} = \sum_{k=1}^{\Nc}h_{ik}x_{kj},
 \quad 1\le i,j\le \Nn,
\end{equation}
where $h_{ik}=[\Hnc]_{ik}$ and $x_{kj}=[\Xc]_{kj}$. Note that $h_{ik}$
is i.i.d. random variable with zero mean and unit variance, therefore,
\begin{equation}
\Expt[\tilde{h}_{ij}^{\dag}\tilde{h}_{mn}] = 0 \qquad (i,j) \neq(m,n).
\end{equation}
For $(i,j)=(m,n)$ we have
\begin{align}
\Expt[|\tilde{h}_{ij}|^2] & = \sum_{k=1}^{\Nc} \Expt\big[|h_{ik}|^2
  |x_{kj}|^2 \big]  \label{eq:heq1}
\\ & = \Expt\big[\sum_{k=1}^{\Nc} |x_{kj}|^2 \big] = 1, \label{eq:heq2}
\end{align}
where~\eqref{eq:heq2} holds because $\Expt [|h_{ik}|^2]=1$ and each
column of $\Xc$ has unit norm. Therefore, the equivalent channel
$\Heq1 $ has uncorrelated entries with zero mean and unit
variance.

We now find a lower bound for the mutual information
\begin{equation}
I(\Xn;\Yn) = h(\Yn) - h(\Yn|\Xn),
\end{equation}
i.e., an achievable rate for the dynamic receiver.
First, we find an upper bound for $h(\Yn|\Xn)$. Let $\mathbf{y}_{1i}$ be
the row $i$ of $\Yn$. Using the independence bound on entropy:
\begin{equation}
h(\Yn|\Xn) \le \sum_{i=1}^{\Nn} h(\mathbf{y}_{1i}|\Xn) .  \label{eq:Y1IndependentBound1}
\end{equation}
Let $\tilde{\bh}_i$ be the row
$i$ of $\Heq1 $. Then, conditioned on $\Xn$ the covariance of
$\mathbf{y}_{1i}$ is
\begin{align}
\Expt[\mathbf{y}_{1i}^{\dag}\mathbf{y}_{1i}|\Xn] & = \frac{\Tc}{\Nn} \Xn^{\dag}\,
\Expt\big[\tilde{\bh}_i^{\dag}\tilde{\bh}_i \big]\,
\Xn + \frac{1}{\rho} \,\mathbf{I}_{\Tc}
\\ &= \frac{\Tc}{\Nn} \Xn^{\dag}\Xn +
\frac{1}{\rho} \, \mathbf{I}_{\Tc},
\end{align}
where the last equality holds since all the elements in $\Heq1 $
are uncorrelated with zero mean and unit variance. In addition, given
$\Xn$, the vector $\mathbf{y}_{1i}$ has zero mean, and therefore,
$h(\mathbf{y}_{1i}|\Xn)$ is upper bounded by the differential entropy
of a multivariate normal random vector with the same
covariance~\cite{Coverbook}:
\begin{align}
h(\mathbf{y}_{1i}|\Xn) & \le \log \det\big( \frac{\Tc}{\Nn}\Xn^{\dag}\Xn +
\frac{1}{\rho}\mathbf{I}\big) \label{eq:Y1GaussianBound1} \\
& \le \Nn \log \big(\frac{\Tc}{\Nn}  + \frac{1}{\rho}\big) - (\Tc-\Nn)\log \rho. \label{eq:Y1GaussianBound2}
\end{align}
Combining~\eqref{eq:Y1IndependentBound1}
and~\eqref{eq:Y1GaussianBound2}, we obtain
\begin{equation}
h(\Yn|\Xn) \le \Nn^2 \log \big(\frac{\Tc}{\Nn} +
\frac{1}{\rho}\big) - \Nn(\Tc-\Nn)\log
\rho. \label{eq:Y1ConditionEntropy1}
\end{equation}

After calculating the upper bound for $h(\Yn|\Xn)$, we now find a
lower bound for $h(\Yn)$ as follows.
\begin{align}
h(\Yn) & > h\big(\sqrt{\frac{\Tc}{\Nn}}\Hnc \Xc\Xn\big) \label{eq:RemoveNoise}\\ & \ge
h\big(\sqrt{\frac{\Tc}{\Nn}}\Hnc\Xc\Xn\,\big|\,\Hnc,\Xc\big),\label{eq:Rx1LB11}
\end{align}
where \eqref{eq:RemoveNoise} holds since we remove the noise, and
\eqref{eq:Rx1LB11} holds since conditioning does not increase
differential entropy. The Jacobian from $\Xn$ to $\Hnc\Xc\Xn$
is~\cite[Theorem. 2.1.5]{Muirheadbook}:
\begin{equation}
J_{X_1} = \bigg(\sqrt{\frac{\Tc}{\Nn\Nc}}\det(\Hnc\Xc)\bigg)^{\Nn}.
\end{equation}
Therefore, from~\eqref{eq:Rx1LB11} we have
\begin{align}
h(\Yn) > h(\Xn) + \Expt[\log J_{X_1}], \label{eq:Y1Entropy1}
\end{align}
where the expectation is with respect to $\Xc$ and $\Hnc$. Because
$\Xn$ is an isotropically distributed unitary matrix, i.e., uniformly
distributed on the Stiefel manifold $\mathbb{F}(\Tc,\Nn)$, we
have~\cite{Zheng2002}
\begin{equation}
h(\Xn) = \log \big|\mathbb{F}(\Tc,\Nn)\big|,
\end{equation} 
where $\big|\mathbb{F}(\Tc,\Nn)\big|$ is the volume of
$\mathbb{F}(\Tc,\Nn)$ based on the Haar measure induced by the
Lebesgue measure restricted to the Stiefel
manifold~\cite{Muirheadbook}:
\begin{equation}
\big|\mathbb{F}(\Tc,\Nn)\big| =
\prod_{i=\Tc-\Nn+1}^{\Tc}\frac{2\pi^i}{(i-1)!}. \label{eq:volumeS}
\end{equation}

Finally, combining~\eqref{eq:Y1ConditionEntropy1}
and~\eqref{eq:Y1Entropy1}, we obtain
\begin{align}
I(\Xn;\Yn) & > \Nn(\Tc-\Nn)\log \rho + \log |\mathbb{F}(\Tc, \Nn)| \nonumber 
\\ &\quad + 
\Expt[\log J_{X_1}] - \Nn\sum_{i=1}^{\Nn} \log \big(\frac{\Tc}{\Nn}  +
\frac{1}{\rho}\big)
\\ & = \Nn(\Tc-\Nn)\log \rho + O(1). \label{eq:R12}
\end{align}
Normalizing $I(\Xn;\Yn)$ over $\Tc$ time-slots yields the achievable
rate of the dynamic receiver.

\subsection{Achievable Rate for the Static Receiver}

The signal received at the static receiver is 
\begin{equation}
\Yc = \sqrt{\frac{\Tc}{\Nn}}\, \Hc \Xc \Xn +  \frac{1}{\sqrt{\rho}}\Wc, \label{eq:Rxc} 
\end{equation}
where $\Hc\in \mathcal{C}^{\Nc\times \Nc}$ is the static channel and
$\Wc\in \mathcal{C}^{\Nc\times \Tc}$ is additive Gaussian
noise. Denote the sub-matrix containing the first $\Nn$ columns of
$\Yc$ with $\Yc'$.
\begin{equation}
\Yc^{\prime} = \sqrt{\frac{\Tc}{\Nn}}\, \Hc \Xc \Xn^{\prime} +
\frac{1}{\sqrt{\rho}}\Wc^{\prime},
\end{equation}
where $\Xn^{\prime}\in\mathcal{C}^{\Nn\times\Nn}$ is
the corresponding sub-matrix of $\Xn$, and
$\Wc^{\prime}\in\mathcal{C}^{\Nn\times\Nn}$ is i.i.d. Gaussian
noise. Given $\Hc$, the mutual information between $\Yc$ and $\Xc$ is
lower bounded by:
\begin{equation}
I(\Yc;\Xc |\Hc)\ge I(\Yc^{\prime};\Xc |\Hc).
\end{equation} 
We will focus on $I(\Yc^{\prime};\Xc|\Hc)$ to derive a lower bound.
Using the singular value decomposition (SVD):
\begin{equation}
\Hc = \mathbf{U}_2^{\dag} \mathbf{\Sigma}_2 \mathbf{V}_2, \label{eq:svdHc}
\end{equation}
where $\mathbf{U}_2$, $\mathbf{V}_2 \in \mathcal{C}^{\Nc\times \Nc}$
 and
$\mathbf{\Sigma}_2 = \diag(\lambda_1,\cdots,\lambda_{\Nc})$
with $|\lambda_1|\ge \cdots\ge |\lambda_{\Nc}|$. Since
$\Hc$ is known and non-singular, the dynamic receiver applies
$\Hc^{-1}$ to remove it:
\begin{align}
\Hc^{-1}\Yc^{\prime} & = \sqrt{\frac{\Tc}{\Nn}} \Xc \Xn^{\prime} +
\frac{1}{\sqrt{\rho}}\Wc^{\prime\prime} . \label{eq:U_2Y_2}
\end{align}
The columns of $\Wc^{\prime\prime}$ are mutually independent, and each
column has an autocorrelation:
\begin{align}
\bR_W  = \mathbf{V}_2^{\dag}\,\mathbf{\Sigma}_2^{-2} \mathbf{V}_2. \label{eq:R_w}
\end{align}

Because mutual information is independent of the choice of
coordinates, we have
\begin{align}
I(\Yc^{\prime};\Xc |\Hc ) & = I(\Hc^{-1}\Yc^{\prime};\Xc|\Hc) 
\\ & = h(\Hc^{-1}\Yc^{\prime}|\Hc) -
h(\Hc^{-1}\Yc^{\prime}|\Xc,\Hc).  \label{eq:ICoordinate}
\end{align}
Let $\mathbf{y}_{2,i}$ be the column $i$ of $\Hc^{-1}\Yc^{\prime}$,
then via the independence bound on entropy:
\begin{equation}
h(\Hc^{-1}\Yc^{\prime}|\Xc,\Hc)\le \sum_{i=1}^{\Nn}
h(\mathbf{y}_{2,i}|\Xc,\Hc). \label{eq:IndependenceBound}
\end{equation}
From~\eqref{eq:U_2Y_2} and~\eqref{eq:R_w}, the
autocorrelation of $\mathbf{y}_{2,i}$ conditioned on $\Xc$ and $\bH_2$ is
\begin{align}
\bR_{2,i} & = \frac{\Tc}{\Nn}  \Xc
\Expt[\mathbf{x}_{1,i}^{\prime}\mathbf{x}_{1,i}^{\prime\dag}]\Xc^{\dag} +
\frac{1}{\rho}\bR_W
\\    & = \frac{\Tc}{\Nn} \Xc \bR_{1,i}\Xc^{\dag} + \frac{1}{\rho}\bR_W
\end{align}
where $\mathbf{x}_{1,i}^{\prime}\in\mathcal{C}^{\Nn\times 1}$ is the
column $i$ of $\Xn^{\prime}$ and has autocorrelation $\bR_{1,i}$. The
expected value of $\mathbf{y}_{2,i}$ is zero and thus the differential
entropy is maximized if $\mathbf{y}_{2,i}$ has multivariate normal
distribution~\cite{Coverbook}:
\begin{align}
 h(\mathbf{y}_{2,i}|\Xc,\Hc) & \le  \log \det \big(\frac{\Tc}{\Nn} \Xc \bR_{1,i}\Xc^{\dag} +
 \frac{1}{\rho}\bR_W\big)\nonumber
\\  & = \log \det \big(\frac{\Tc}{\Nn} \mathbf{V}_2 \Xc \bR_{1,i}\Xc^{\dag} \mathbf{V}_2^{\dag} +
 \frac{1}{\rho}\mathbf{\Sigma}_2^{-2} \big). \label{eq:DEntropyBound1}
\end{align}

The following lemma calculates $\bR_{1,i}$, the autocorrelation of a
column of an i.d. matrix.
\begin{lemma}
\label{lemma:idcorrelation}
If $\bQ\in\mathcal{C}^{\Tc\times \Tc}$ is isotropically
distributed (i.d.)  unitary matrix, then each row and column of
$\bQ$ is an i.d. unit vector 
with autocorrelation $\frac{1}{T}\bI_T$.
\end{lemma}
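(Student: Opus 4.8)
The plan is to first establish the distributional claims (each row/column is an i.d.\ unit vector) and then compute the autocorrelation by a symmetry-plus-trace argument. Since $\bQ$ is unitary, every row has unit norm: $[\bQ\bQ^\dag]_{ii} = 1$ is exactly the statement that the $i$-th row has norm one; similarly $[\bQ^\dag\bQ]_{jj}=1$ gives unit-norm columns. To see that, say, the first row $\mathbf{q}\in\mathcal{C}^{1\times\Tc}$ is isotropically distributed, I would use Definition~\ref{def:id}: for any deterministic $\Tc\times\Tc$ unitary $\mathbf{\Phi}$, right-multiplication gives $p(\bQ)=p(\bQ\mathbf{\Phi})$, and the first row of $\bQ\mathbf{\Phi}$ is $\mathbf{q}\mathbf{\Phi}$; hence $p(\mathbf{q})=p(\mathbf{q}\mathbf{\Phi})$, which is precisely the definition of an i.d.\ (unit) vector in $\mathbb{F}(\Tc,1)$. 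The argument for columns is the same with left-multiplication by $\mathbf{\Phi}^\dag$ acting on $\bQ^\dag$ (equivalently, note $\bQ$ i.d.\ unitary implies $\bQ^\dag$ i.d.\ unitary by taking conjugate transposes, and then apply the row result to $\bQ^\dag$).

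For the autocorrelation, fix a column $\mathbf{x} = \bQ \mathbf{e}_j \in \mathcal{C}^{\Tc\times 1}$ and let $\bR = \Expt[\mathbf{x}\mathbf{x}^\dag]$. The key step is a rotation-invariance argument: because $\mathbf{x}$ is an i.d.\ unit vector, for any deterministic unitary $\mathbf{\Phi}\in\mathcal{C}^{\Tc\times\Tc}$ the vector $\mathbf{\Phi}\mathbf{x}$ has the same distribution as $\mathbf{x}$ (here I use i.d.\ in the left-multiplication sense, which for a vector is equivalent to the right-multiplication sense after transposing, or can be taken directly from the column version just established). Hence $\bR = \Expt[(\mathbf{\Phi}\mathbf{x})(\mathbf{\Phi}\mathbf{x})^\dag] = \mathbf{\Phi}\,\bR\,\mathbf{\Phi}^\dag$ for every unitary $\mathbf{\Phi}$, so $\bR$ commutes with all unitaries and therefore $\bR = c\,\bI_\Tc$ for some scalar $c\ge 0$ (by Schur's lemma, or elementarily by plugging in permutation and diagonal phase unitaries). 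Taking traces, $c\,\Tc = \tr(\bR) = \Expt[\tr(\mathbf{x}\mathbf{x}^\dag)] = \Expt[\|\mathbf{x}\|^2] = 1$ since $\mathbf{x}$ has unit norm, giving $c = 1/\Tc$ and $\bR = \frac{1}{\Tc}\bI_\Tc$. The identical computation applied to a row yields its autocorrelation $\frac{1}{\Tc}\bI_\Tc$ as well.

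I do not anticipate a genuine obstacle here; the only point requiring a little care is the direction of invariance. Definition~\ref{def:id} is stated for \emph{right} multiplication ($p(\bX)=p(\bX\mathbf{\Phi})$), which directly handles rows of $\bQ$, whereas the autocorrelation argument for a \emph{column} wants left-multiplication invariance. This is resolved by observing that $\bQ$ i.d.\ unitary implies $\bQ^\dag$ is i.d.\ unitary (apply the definition to $\bQ$ and conjugate-transpose both sides, using that $\mathbf{\Phi}^\dag$ ranges over all unitaries as $\mathbf{\Phi}$ does), so the column statement for $\bQ$ is the row statement for $\bQ^\dag$; alternatively, for the autocorrelation one can work with $\mathbf{x}^t$ (a row) throughout and transpose at the end. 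Everything else is the standard Schur-lemma normalization, so the write-up is short.
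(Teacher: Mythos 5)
Your proposal is correct and follows essentially the same route as the paper: rotation invariance of the marginal law forces the autocorrelation to commute with all unitaries, hence $\bR = c\,\bI_{\Tc}$, and the unit-norm trace condition gives $c = 1/\Tc$ (the paper reaches $c\,\bI_{\Tc}$ via an eigendecomposition of $\bR_q$ rather than quoting Schur's lemma, but that is a cosmetic difference). Your extra care about the direction of invariance for columns (right-multiplication in Definition~\ref{def:id} versus the left-multiplication needed for columns, resolved via $\bQ^\dag$) is a point the paper's proof glosses over, and is a worthwhile refinement rather than a different approach.
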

\begin{proof}
From Definition~\ref{def:id}, given $\bQ$ is i.d., for any
deterministic unitary matrix $\mathbf{\Phi} \in \mathcal{C}^{\Tc\times \Tc}$, 
\begin{equation}
p(\mathbf{Q\Phi}) = p(\bQ),
\end{equation}
which implies that the marginal distribution of each row and column
remains unchanged under any transform $\mathbf{\Phi}$. Therefore, each
row and column is an i.d. unit vector. Without loss of generality, we
consider the first row of $\bQ$, denoted as $\bq_1$. Let the
autocorrelation of $\bq_1$ be $\bR_q$ and posses the eigenvalue
decomposition $\bR_q = \bP^{\dag} \mathbf{\Sigma}_q \bP$, where
$\bP\in \mathcal{C}^{n\times n}$ is unitary and $\mathbf{\Sigma}_q$ is
diagonal. Since $\bq_1\bP^{\dag}$ has the same distribution as
$\bq_1$, therefore
\begin{align}
\bR_q  = \Expt[\bq_1^{\dag}\bq_1] =\bP \; \Expt[\bq_1^{\dag}\bq_1] \, \bP^{\dag} = \mathbf{\Sigma}_q.
\end{align}
Thus $\bR_q$ is a diagonal matrix. Furthermore, the diagonal elements
of $\mathbf{\Sigma}_q$ have to be identical, i.e., $\mathbf{\Sigma}_q
= a\mathbf{I}_{\Tc}$, otherwise $\bR_q$ would not be rotationally invariant
which conflicts with the i.d. assumption. Finally, because $\tr(\bR_q)
= 1$, we have $\mathbf{\Sigma}_q=\mathbf{I}_{\Tc}/\Tc$.
This completes the proof of Lemma~\ref{lemma:idcorrelation}.
\end{proof}

Since $\Xn$ is an i.d. unitary matrix, based on
Lemma~\ref{lemma:idcorrelation}, the autocorrelation of its sub-column
is
\begin{equation}
\bR_{1,i}= \mathbf{I}_{\Nn}/\Tc.
\end{equation}
Therefore, the eigenvalues of $\mathbf{V}_2 \Xc
\bR_{1,i}\Xc^{\dag} \mathbf{V}_2^{\dag}$ are
\begin{equation}
\bigg(\underset{\Nn}{\underbrace{\frac{1}{T},\cdots,\frac{1}{T}}},\underset{\Nc-\Nn}{\underbrace{0,\cdots,0}}\bigg).
\end{equation}
We now bound the eigenvalues of the sum of two matrices
in~\eqref{eq:DEntropyBound1}, noting that $\lambda_j^{-2}$ are in
ascending order and using a theorem of Weyl~\cite[Theorem 4.3.1]{Hornbook}:
\begin{align}
 h(\mathbf{y}_{2,i}|\Xc,\Hc) & \le
 \Nn\log\big(\frac{1}{\Nn} +
 \lambda_{\Nc}^{-2}\big) \nonumber \\
& \quad +(\Nc-\Nn)\log \frac{1}{\rho}\lambda_{\Nc}^{- 2}.  
\label{eq:detBound}
\end{align}

From~\eqref{eq:IndependenceBound} and~\eqref{eq:detBound}, we have:
\begin{align}
&h(\Hc^{-1}\Yc^{\prime}|\Xc,\Hc)\le \Nn^2 \log\big(\frac{1}{\Nn} +
 \lambda_{\Nc}^{-2}\big) \nonumber\\ 
& \qquad- \Nn (\Nc-\Nn)\log \lambda_{\Nc}^{-2}  - \Nn(\Nc-\Nn) \log \rho  \label{eq:hCondition}.
\end{align}

We now calculate a lower bound for $h(\Hc^{-1}\Yc^{\prime}|\Hc)$:
\begin{align}
h(\Hc^{-1}\Yc^{\prime}|\Hc) & > h(\sqrt{\frac{\Tc}{\Nn}}\Xc
\Xn^{\prime}|\Hc)\\
                & > h(\sqrt{\frac{\Tc}{\Nn}}
\Xc\Xn^{\prime} |\Xn^{\prime},\Hc). \label{eq:H_2Y_2}
\end{align}
From~\cite[Theorem. 2.1.5]{Muirheadbook}, given $\Xn^{\prime}$ the
Jacobian of the transformation from $\Xc$ to $\sqrt{\frac{\Tc}{\Nn}}
\Xc\Xn^{\prime}$ is:
\begin{equation}
J_{X_2} =
\bigg(\sqrt{\frac{\Tc}{\Nn}}\bigg)^{\Nc}\det(\Xn^{\prime})^{\Nn}.
\end{equation}
Therefore, from the right hand side of~\eqref{eq:H_2Y_2} we have
\begin{equation}
h(\Hc^{-1}\Yc^{\prime}|\Hc) > h(\Xc) + \Expt[\log J_{X_2}], \label{eq:hH_2Y_2}
\end{equation}
where the expectation is with respect to $\Xn^{\prime}$.
%
%
Because $\Xc$ is uniformly distributed on the Stiefel manifold
$\mathbb{F}(\Nc,\Nn)$, we have~\cite{Zheng2002}
\begin{equation}
h(\Xn) = \log \big|\mathbb{F}(\Nc,\Nn)\big|,
\end{equation} 
where $\big|\mathbb{F}(\Nc,\Nn)\big|$ is the volume of
$\mathbb{F}(\Nc,\Nn)$, which is given by~\cite{Muirheadbook}:
\begin{equation}
\big|\mathbb{F}(\Nc,\Nn)\big| = \prod_{i=\Nc-\Nn+1}^{\Nc}\frac{2\pi^i}{(i-1)!}.
\end{equation}

Finally, substituting~\eqref{eq:hH_2Y_2} and~\eqref{eq:hCondition}
into~\eqref{eq:ICoordinate}, we have
\begin{align}
I(\Yc^{\prime};\Xc|\Hc) & = \Nn (\Nc-\Nn) \log\rho + O(1).
\end{align}
Hence, the rate achieved by the static receiver is
\begin{align}
\frac{1}{\Tc}\Expt[I(\Yc^{\prime};\Xc|\Hc)]  = \frac{\Nn}{\Tc}(\Nc-\Nn)\log \rho
+ O(1),
\end{align}
where the expectation is with respect to $\Hc$.

\section{Proof of Corollary~\ref{cor:dof1}}
\label{app:cordof1}

The objective is to find the best dimensions for the transmit signals
$\bX_1 \in {\cal C}^{\hat{\Nn}\times\hat{T}}$ and $\bX_2 \in {\cal
  C}^{\hat{\Nc}\times\hat{\Nn}}$.
From Theorem~\ref{thm:G-G}, it is easily determined that
$\hat{\Nc}=\Nc$ is optimal, because the pre-log factor of $R_2$
increases with $\hat{\Nc}$ and the pre-log factor of $R_1$ is
independent of $\hat{\Nc}$ (given $\hat{\Nn} \le \Nc $).

To find the optimal values of $\hat{\Nn}, \hat{T}$, we start by
relaxing the variables by allowing them to be continuous valued,
i.e. $\hat{\Nn} \rightarrow x$ and $\hat{T} \rightarrow y$, and then
showing via the derivatives that the cost functions are monotonic,
therefore optimal values reside at the boundaries, which are indeed
integers. 

Using the DoF expression from 
Theorem~\ref{thm:G-G}, the slope between two achievable points
${\mathcal D}_2$ and $\mathcal{D}_3$ is:
\begin{equation}
f(x,y) = \frac{x(\Nc - x)/y-\Nc}{x(1-x/y)}.
\end{equation}
Therefore, for all $0<x\le \Nn$,
\begin{align}
\frac{\partial f(x,y)}{\partial y} & = \frac{x}{(y-x)^2}>0.
\end{align}
We wish to maximize $f$ with the constraint $y\le T$, thus $y=T$ is
optimal.

Substituting $y=T$ into $f(x,y)$, we have
\begin{align}
\frac{\partial f(x,T)}{\partial x} & = -\frac{(T-\Nc)x^2+T\Nc x -
  T^2\Nc }{x^2(T-x)^2}. \label{eq:partialx}
\end{align}

If $T=\Nc$, since $x\le T/2$, then $\frac{\partial f}{\partial
  x}>0$. In this case $x=\Nn$ maximizes the DoF region.

If $T\neq \Nc$, let $T = \alpha \Nc$. When $0<\alpha<\frac{3}{4}$, one
can verify that $\frac{\partial f}{\partial x}>0$ for all $x>0$.
Thus, $x=\Nn$ is optimal. When $\alpha\ge \frac{3}{4}$, let
$\frac{\partial f}{\partial x}=0$, and we have the corresponding
solutions:
\begin{equation}
x_{1,2} = \frac{-\alpha\Nc \pm \alpha\Nc
  \sqrt{1+4(\alpha-1)}}{2(\alpha-1)}. \label{eq:x12}
\end{equation}
When $\frac{3}{4}\le \alpha < 1$, the above solutions are positive,
where the smaller one is:
\begin{align}
x_1 = \frac{\alpha \Nc - \alpha \Nc\sqrt{1-4(1-\alpha)}}{2(1-\alpha)} > \Nn.
\end{align}
Since $\frac{\partial f}{\partial x}>0$ at $x=0$, we have
$\frac{\partial f}{\partial x}>0$ for $0\le x\le \Nn$.
When $\alpha > 1$, the (only) positive solution of~\eqref{eq:x12} is:
\begin{equation}
x_1 = \frac{\alpha \Nc + \alpha \Nc \sqrt{1+4(\alpha - 1)}} {2(\alpha - 1)} > \Nn.
\end{equation}
Once again, since $\frac{\partial f}{\partial x}>0$ at $x=0$, we have
$\frac{\partial f}{\partial x}>0$ for $0\le x\le \Nn$.

Therefore, for all cases, $x=\Nn$ maximizes the DoF
region.

\section{Proof of Theorem~\ref{thm:G-E}}
\label{app:thmdof2}

\subsection{Achievable Rate for the Dynamic Receiver}
The proof is similar to the proof for Theorem~\ref{thm:G-G}, so we
only outline key steps.
The received signal at the dynamic receiver is
\begin{equation}
\Yn = \sqrt{\frac{\Tc}{\Nn\Nc}}\Hnc \Xc\Xn +
\frac{1}{\sqrt{\rho}} \Wn,
\end{equation}
where $\Yn\in \mathcal{C}^{\Nn\times \Tc}$ and $\Hnc \in
\mathcal{C}^{\Nn\times \Nc}$ and $\Wn\in \mathcal{C}^{\Nn\times \Tc}$
is additive Gaussian noise. We establish a lower bound for the mutual
information between $\Xn$ and $\Yn$:
\begin{align}
I(\Xn;\Yn) & = h(\Yn) - h(\Yn|\Xn).
\end{align}
In the above equation, we have
\begin{align}
 h(\Yn|\Xn) &\le  \sum_{i=1}^{\Nn} h(\mathbf{y}_{1i}|\Xn) .\nonumber
\end{align}
One can verify
\begin{align}
h(\mathbf{y}_{1i}|\Xn) & \le \log \det\big( \frac{\Tc}{\Nc}\Xn^{\dag}\Xn +
\frac{1}{\rho}\mathbf{I}\big). 
\end{align}
Finally, we obtain
\begin{align}
  h(\Yn|\Xn) & < \Nn^2 \log \big(\frac{\Tc}{\Nc} +
\frac{1}{\rho}\big) - \Nn(\Tc\!-\!\Nn)\log
\rho.  \label{eq:dofR1Y1ConditionBound}
\end{align}

The lower bound is given by:
\begin{align}
h(\Yn) & >  \log |\mathbb{F}(\Tc, \Nn)| + \Expt[\log J_{X_1}] ,
\end{align}
where the expectation is with respect to $\Hnc$ and $\Xc$, and
\begin{equation}
J_{X_1} = \bigg(\sqrt{\frac{\Tc}{\Nn\Nc}}\det(\Hnc\Xc)\bigg)^{\Nn}. \label{eq:dofR1Y1Bound}
\end{equation}

Combining~\eqref{eq:dofR1Y1ConditionBound}
and~\eqref{eq:dofR1Y1Bound}, and normalizing over $\Tc$ time-slots
leads to the achievable rate of the dynamic receiver.

\subsection{Achievable Rate for the Static Receiver}
The received signal at the static receiver is $\Yc\in
\mathcal{C}^{\Nc\times \Tc}$
\begin{align}
\Yc & = \sqrt{\frac{\Tc}{\Nn\Nc}}\Hc \Xc\Xn +
\frac{1}{\sqrt{\rho}} \Wc, \nonumber
\end{align}
where $\Hc\in \mathcal{C}^{\Nc\times \Nc}$ is the static channel, and
$\Wc\in \mathcal{C}^{\Nn\times \Tc}$ is additive Gaussian noise.

We first calculate the {\em decodable} dynamic rate at the static
receiver in the next lemma.
\begin{lemma}
\label{lemma:decodable}
The static receiver is able to decode the dynamic rate $R_1$ if 
\begin{equation}
R_1 \le \Nn(1-\Nn/\Tc)\log \rho + O(1).
\end{equation}
\end{lemma}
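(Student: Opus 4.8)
The plan is to show that when the dynamic receiver can decode its own message at rate $R_1 = \Nn(1-\Nn/\Tc)\log\rho + O(1)$, the static receiver can decode the same message, because the equivalent channel seen by the static receiver preserves exactly the subspace information that carries $\Xn$. The key structural observation is that, at the static receiver, $\Yc \approx \sqrt{\Tc/(\Nn\Nc)}\,\Hc\Xc\Xn = \sqrt{\Tc/(\Nn\Nc)}\,\widetilde{\bH}_2\Xn$, where $\widetilde{\bH}_2 = \Hc\Xc \in \mathcal{C}^{\Nc\times\Nn}$ has full column rank $\Nn$ almost surely (here we use $\Nn\le\Nc$ and that $\Hc$ is full rank and $\Xc$ has i.i.d.\ Gaussian entries, hence full rank). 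Because $\widetilde{\bH}_2$ has full column rank, multiplying $\Xn$ on the left does not change its row space: the row space of $\widetilde{\bH}_2\Xn$ equals the row space of $\Xn$. Therefore the static receiver's noiseless signal lies in exactly the same Grassmannian point as the dynamic receiver's, and the same decoding argument applies.

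First I would write down the received signal and argue that $\widetilde{\bH}_2 = \Hc\Xc$ has full column rank with probability $1$. Second, I would set up the mutual information $I(\Xn;\Yc\,|\,\Hc)$ and lower-bound it exactly as in the achievability proof for the dynamic receiver in Theorem~\ref{thm:G-G} and Theorem~\ref{thm:G-E}: bound $h(\Yc\,|\,\Xn,\Hc)$ from above using the independence bound across rows and a Gaussian-maximizes-entropy argument, where the conditional covariance of each row involves $\Xn^\dag\Xn$ plus a noise term, giving the same $-\Nn(\Tc-\Nn)\log\rho$ scaling; and bound $h(\Yc\,|\,\Hc)$ from below by discarding noise, conditioning on $\Hc$ and $\Xc$ (which is available jointly once we condition on $\Hc$ — or, more carefully, using the Jacobian of the map $\Xn\mapsto\Hc\Xc\Xn$ restricted to an $\Nn\times\Nn$ submatrix, whose log-determinant contributes only $O(1)$), and invoking the Stiefel-manifold volume formula. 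Combining, $I(\Xn;\Yc\,|\,\Hc) = \Nn(\Tc-\Nn)\log\rho + O(1)$, so normalizing over $\Tc$ slots shows the static receiver can reliably decode any dynamic rate up to $\Nn(1-\Nn/\Tc)\log\rho + O(1)$.

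The cleanest way to organize this is to observe that the static channel, after discarding the last $\Nc-\Nn$ rows (or more precisely working with the effective $\Nn$-dimensional image of $\widetilde{\bH}_2$), is information-theoretically at least as good as a non-coherent point-to-point MIMO channel with $\Nn$ transmit dimensions, $\Nn$ receive dimensions and coherence $\Tc$ — whose non-coherent capacity pre-log is $\Nn(1-\Nn/\Tc)$ by the results recalled in Section~\ref{sec:noncoherent}. Since $\Xn$ is drawn from an isotropically distributed unitary codebook on $\mathbb{G}(\Tc,\Nn)$, which is the capacity-optimal input, the static receiver achieves this pre-log; hence any $R_1$ at or below $\Nn(1-\Nn/\Tc)\log\rho + O(1)$ is decodable at the static receiver.

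The main obstacle I anticipate is handling the non-Gaussianity and correlation introduced by the random equivalent channel $\widetilde{\bH}_2 = \Hc\Xc$ cleanly. Unlike the dynamic-receiver analysis where $\Hnc\Xc$ has conveniently uncorrelated unit-variance entries (shown earlier via the unitarity of $\Xc$), here $\Hc$ is a fixed full-rank matrix and $\Xc$ is Gaussian, so $\widetilde{\bH}_2$ has correlated entries; one must verify that the row covariance of $\Yc$ conditioned on $\Xn$ still scales like $(\Tc/\Nc)\Xn^\dag\Xn + \rho^{-1}\bI$ up to constants (it does, because the per-row second moment of $\widetilde{\bH}_2$ is a fixed positive-definite matrix depending only on $\Hc$, contributing $O(1)$ to the entropy), and that the lower bound on $h(\Yc\,|\,\Hc)$ via the Jacobian goes through with $\det(\widetilde{\bH}_2)$ replaced by $\det$ of an appropriate $\Nn\times\Nn$ submatrix, which is nonzero a.s.\ and contributes only $\Expt[\log|\det(\cdot)|] = O(1)$. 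Once these bookkeeping points are pinned down, the pre-log constants match and the lemma follows.
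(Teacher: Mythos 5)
Your proposal is correct and follows essentially the same route as the paper: reduce the static receiver's observation to an effective $\Nn\times\Nn$ non-coherent channel with coherence $\Tc$, then lower-bound $I(\Xn;\Yc\,|\,\Hc)$ via the independence/Gaussian-entropy upper bound on $h(\Yc\,|\,\Xn,\Hc)$ and the Jacobian--Stiefel-volume lower bound on $h(\Yc\,|\,\Hc)$, exactly as in the dynamic-receiver analysis. The only cosmetic difference is in handling the correlated equivalent channel $\Hc\Xc$ that you flag: the paper first applies the SVD $\Hc=\mathbf{U}_2^{\dag}\mathbf{\Sigma}_2\mathbf{V}_2$, uses the rotation-invariance of the Gaussian $\Xc$, and keeps the $\Nn$ rows corresponding to the largest singular values so the effective channel is a diagonal matrix times an i.i.d.\ Gaussian $\Nn\times\Nn$ matrix, whereas you work directly with a full-rank $\Nn\times\Nn$ submatrix of $\Hc\Xc$; both reductions contribute only $O(1)$ terms and yield the same pre-log $\Nn(1-\Nn/\Tc)$.
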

\begin{proof}
Use the SVD for $\Hc$ and re-write the signal at the static receiver
as
\begin{align}
\Yc = \sqrt{\frac{\Tc}{\Nn\Nc}} \mathbf{U}_2^{\dag}\mathbf{\Sigma}_2\mathbf{V}_2 \Xc\Xn +
\frac{1}{\sqrt{\rho}} \Wc,
\end{align}
Because $\Xc$ is an isotropically distributed unitary matrix, $\Xc^{\prime} \triangleq
\mathbf{V}_2\Xc$ has the same distribution as $\Xc$, i.e., a matrix of
i.i.d. $\mathcal{CN}(0,1)$. Rotate $\Yc$ with $\mathbf{U}_2$
\begin{equation}
\Yc^{\prime} \triangleq \mathbf{U}_2 \Yc = \sqrt{\frac{\Tc}{\Nn\Nc}} \mathbf{\Sigma}_2 \Xc^{\prime}\Xn +
\frac{1}{\sqrt{\rho}} \Wc^{\prime},
\end{equation}
where $\Wc^{\prime}$ is i.i.d. Gaussian noise. Let
$\Yc^{\prime\prime}\in\mathcal{C}^{\Nn\times \Tc}$ be the first $\Nn$
rows of $\Yc^{\prime}$, i.e., the rows corresponding to the largest
$\Nn$ singular modes of $\Hc$, that is $|\lambda_1|\ge \cdots \ge
|\lambda_{\Nn}|$. We denote the corresponding $\Nn\times \Nn$
sub-matrix of $\Xc^{\prime}$ by $\Xc^{\prime\prime}$. Then,
\begin{equation}
\Yc^{\prime\prime} = \diag(\lambda_1,\cdots,\lambda_{\Nn})
\,\Xc^{\prime\prime} \Xn + \frac{1}{\sqrt{\rho}} \Wc^{\prime\prime}.
\end{equation}

Conditioned on $\bH_2$, the decodable dynamic rate at the static receiver is 
\begin{align}
I(\Xn;\Yc|\Hc) & = I(\Xn;\Yc'|\Hc), \nonumber 
\end{align}
which is lower bounded by
\begin{equation} 
I(\Xn;\Yc^{\prime\prime}|\Hc) = h(\Yc^{\prime\prime}|\Hc) -
h(\Yc^{\prime\prime}|\Xn, \Hc).
\end{equation}

Using the independence bound for $h(\Yc^{\prime\prime}|\Xn, \Hc)$ yields
\begin{align}
h(\Yc^{\prime\prime}|\Xn, \Hc) \le \sum_{i=1}^{\Nn}h(\mathbf{y}_{2i}|\Xn,\Hc),
\end{align}
where $\mathbf{y}_{2i}$ is the row $i$ of $\Yc^{\prime\prime}$.  Let
$\mathbf{x}_{2i}$ be the row $i$ of $\Xc^{\prime\prime}$, for $1\le
i\le\Nn$. Since $\Xc^{\prime\prime}\in\mathcal{C}^{\Nn\times \Nn}$
have i.i.d. $\mathcal{CN}(0,1)$ entries, all the row vectors
$\mathbf{x}_{2i}$ have the same autocorrelation $ I_{\Nn} $.

Conditioned on $\Xn$, the autocorrelation of $\mathbf{y}_{2i}=\lambda_i\mathbf{x}_{2i}\Xn$ is given by
\begin{align}
\Expt[\mathbf{y}_{2i}^{\dag}\mathbf{y}_{2i}|\Xn,\Hc] & =
\lambda_i^2 \Xn^{\dag} \Xn +
\frac{1}{\rho}\mathbf{I}_{\Tc}.
\end{align}
Therefore,
\begin{align}
h(\mathbf{y}_{2i}|\Xn,\Hc) &\le \log \det\big(\lambda_i^2\Xn^{\dag} \Xn +
\frac{1}{\rho}\mathbf{I}_{\Tc}\big),\\
& = \Nn \log\big(\lambda_i^2 + \frac{1}{\rho}\big) -
(\Tc-\Nn)\log \rho.
\end{align}
and subsequently, 
\begin{equation}
h(\Yc^{\prime\prime}|\Xn, \Hc) \le \sum_{i=1}^{\Nn} \log (\lambda_i^2
+\frac{1}{\rho}) - \Nn(\Tc-\Nn)\log \rho.
\end{equation}

We now find a  lower bound
for $h(\Yc^{\prime\prime} | \Hc)$. Similar to~\eqref{eq:hH_2Y_2}, we have
\begin{equation}
h(\Yc^{\prime\prime} | \Hc) \ge h(\Xn) + \Expt[J_{X_2}],
\end{equation}
where the expectation is with respect to $\Xc$, and
\begin{equation}
h(\Xn) = |\mathbb{F}(\Tc,\Nn)| =  \prod_{i=\Tc-\Nn+1}^{\Tc}\frac{2\pi^i}{(i-1)!},
\end{equation}
and 
\begin{equation}
J_{X_2} = \prod_{i=1}^{\Nn}\lambda^{2\Nn} \det(\Xc)^{\Nn}.
\end{equation}

Finally, taking expectation over $\Hc$, we obtain
\begin{align}
\Expt[I(\Xn;\Yc|\Hc)] &\ge \Nn(\Tc-\Nn)\log \rho + h(\Xn) + \Expt[J_{X_2}]
\nonumber \\ 
& \quad - \Expt\big[\sum_{i=1}^{\Nn} \log (\lambda_i^2 +\frac{1}{\rho})\big]
\nonumber \\ 
& = \Nn(\Tc-\Nn)\log \rho  + O(1).
\end{align}
This completes the proof for Lemma~\ref{lemma:decodable}.
\end{proof}

Therefore, the transmitter is able to send $\Nn(1-\Nn/\Tc)$ DoF to the
dynamic receiver, while ensuring the dynamic signal is decoded at the
static receiver.

After decoding $\Xn$, the static receiver removes the interference:
\begin{equation}
\Yc\Xn^{\dag} = \sqrt{\frac{\Tc}{\Nn\Nc}}\Hc \Xc + \frac{1}{\sqrt{\rho}}
\Wc^{\prime},
\end{equation}
where $\Wc^{\prime}\in \mathcal{C}^{\Nc\times\Nn}$ is the equivalent
noise whose entries are still i.i.d. $\mathcal{CN}(0,1)$. The
equivalent channel for the static receiver is now a  point-to-point MIMO
channel. With Gaussian input $\Xc$, we have~\cite{Telatar1999}
\begin{equation}
I(\Xc;\Yc|\Hc) = \Nn\Nc\log \rho + O(1). \label{eq:R22}
\end{equation}
Normalizing $I(\Xc;\Yc|\Hc)$ over $\Tc$ time-slots yields the
achievable rate of the static receiver.

\bibliographystyle{IEEEtran} 
\bibliography{IEEEabrv,LiYang-final}

\begin{biographynophoto}{Yang Li} (S'10) received his B.S. 
and M.S. degree in electrical engineering from Shanghai Jiao Tong
University, Shanghai, China in 2005 and 2008, respectively. He is
currently pursuing the Ph.D degree in electrical engineering at the
University of Texas at Dallas. He has interned at Samsung
Telecommunications America in 2012, and at Huawei Technologies Co. Ltd
in 2011 and 2008. His current interests include cognitive radio,
heterogeneous network, interference management and cooperative
communication.
\end{biographynophoto}

\begin{biographynophoto}{Aria Nosratinia}
(S'87-M'97-SM'04-F'10) is Jonsson Distinguished Professor of
  Engineering at the University of Texas at Dallas. He received his
  Ph.D. in Electrical and Computer Engineering from the University of
  Illinois at Urbana-Champaign in 1996. He has held visiting
  appointments at Princeton University, Rice University, and UCLA.
  His interests lie in the broad area of information theory and signal
  processing, with applications in wireless communications. He was the
  secretary of the IEEE Information Theory Society in 2010-2011 and
  the treasurer for ISIT 2010 in Austin, Texas. He has served as
  editor for the IEEE Transactions on Information Theory, IEEE
  Transactions on Wireless Communications, IEEE Signal Processing
  Letters, IEEE Transactions on Image Processing, and IEEE Wireless
  Communications (Magazine). He has been the recipient of the National
  Science Foundation career award, and is a fellow of IEEE.
\end{biographynophoto}

\end{document}